\documentclass[reqno,11pt]{amsart}
\usepackage[english]{babel}
\usepackage[T1]{fontenc}
\usepackage{amsmath,amssymb,amsfonts,amsthm}
\usepackage{hyperref}
\usepackage{slashed}
\usepackage{graphicx,color}
\usepackage[bbgreekl]{mathbbol}
\usepackage{mathtools}
\usepackage[margin=1.5cm]{geometry}

\newtheorem{definition}{\textsf{Definition}}[section]
\newtheorem{theorem}[definition]{\textsf{Theorem}}
\newtheorem{proposition}[definition]{\textsf{Proposition}}

\bibstyle{apsrev4-1}

\newtheorem{remark}[definition]{\textsf{Remark}}

\newcommand{\qsp}[2]{\,\ensuremath{\raise.5ex\hbox{$#1$}\big\slash\raise-.5ex\hbox{$#2$}}} 

\newcommand{\pard}[2]{\frac{\delta#1}{\delta#2}}

\newcommand{\txi}[1]{\widetilde{\xi}^{#1}}
\newcommand{\tJ}[1]{\widetilde{J}_{#1}}
\newcommand{\tgd}[1]{\widetilde{g}^\dag{}^{#1}}

\newcommand{\tgam}{\widetilde{\gamma}}

\newcommand{\RR}{\mathbb{R}}

\begin{document}

\title[BV-BFV GR: Einstein-Hilbert Action]{BV-BFV approach to General Relativity: \\ Einstein-Hilbert action}

\author{A. S. Cattaneo}
\address{institut f\"ur Mathematik, Winterthurerstrasse 190, 8057 Z\"urich, Switzerland}
\email{alberto.cattaneo@math.uzh.ch}

\author{M. Schiavina}
\address{institut f\"ur Mathematik, Winterthurerstrasse 190, 8057 Z\"urich, Switzerland}
\email{michele.schiavina@math.uzh.ch}

\date{\today}

\thanks{This research was (partly) supported by the NCCR SwissMAP, funded by the Swiss National Science Foundation,  and by the 
COST Action MP1405 QSPACE, supported by COST (European Cooperation in Science and Technology). The authors acknowledge partial support of SNF Grants No. 200020-149150/1 and No. PDFMP2\_137103.
M.S. also acknowledges the Forschungskredit of the University of Z\"urich. 
The authors would like to thank J. Stasheff for his helpful comments, and P. Mn\"ev for relevant remarks on earlier stages of the work. Special thanks to the anonymous reviewer, for contributing to the exposition of the paper with precious comments and suggestions.}

\begin{abstract}The present paper shows that General Relativity in the ADM formalism admits a BV-BFV formulation. More precisely, for any $d+1\not=2$ (pseudo-)Riemannian manifold $M$ with spacelike or timelike boundary components, the BV data on the bulk induces compatible BFV data on the boundary. As a byproduct, the usual canonical formulation of General Relativity is recovered in a straightforward way. 
\end{abstract}

\maketitle

\section*{Introduction}
General Relativity (GR) in the Einstein-Hilbert formalism (EH) is not manifestly a gauge theory, in the sense that it is not a theory of principal connections on some space-time manifold. It is nevertheless a  field theory that admits a large group of transformations under which its action is invariant, namely the group of space-time diffeomorphisms. Generalised notions of gauge theories have been considered in the past where the requirement that the symmetry distribution be involutive has been relaxed to accommodate for more general cases like the Poisson sigma model or BF theories. The natural setting for a consistent treatment of such symmetries in view of (perturbative) quantisation is given by the Batalin-(Fradkin)-Vilkovisky \cite{BV81,BFV1,BFV2} (B(F)V) formalism, which generalises the well known and celebrated Becchi-Rouet-Stora, Tyutin \cite{BRST} (BRST) approach to the quantisation of gauge theories.

The BRST and B(F)V formalisms provide a cohomological description of gauge equivalent classes of fields and of coisotropic submanifolds \cite{stash97} (e.g. the critical locus of the action). The BFV formalism for the boundary can in principle be defined just in terms of the induced constraints and, under certain assumptions, it can be quantized to produce a complex whose cohomology in degree zero is the quantization of the reduced phase space. 

In the work of Cattaneo, Mn\"ev and Reshetikhin (CMR) \cite{CMR1,CMR2,CMR3}, it has been shown how it is possible, under certain assumptions, to induce a \emph{compatible}\/ BFV manifold structure on the boundary $\partial M$ starting from a BV manifold structure on the bulk space-time manifold $M$. The observation in CMR is that one needs a compatibility between the bulk BV structure and the boundary BFV structure for the state associated to the bulk to be a cocycle (and hence to define a physical state).  At the semiclassical level, the fundamental compatibility condition is that the failure of the bulk action to be the Hamiltonian function for the BV operator should be given by pullback of the boundary Noether $1$-form. 

This has several advantages, such as a compatible cutting-gluing procedure, which allows one to break topologically nontrivial manifolds into pieces for which one might expect the quantisation to be simpler, as well as a powerful understanding of the quantisation procedure itself as a suitable generalisation of the Atiyah-Segal axioms for (topological) gauge theories \cite{CMR3,Ati,Seg}. Moreover, this approach gives a handle on the classical theory, providing a clean understanding of the \emph{Hamiltonian} approach to classical gauge theories, after Dirac and his canonical constraint analysis \cite{Dirac}, which is in fact a first step towards their quantisation.

As a matter of fact, we will work out here the necessary conditions of the aforementioned BV-BFV approach to General Relativity, i.e. the starting point for the program of BV quantum gravity. Even though most of the results that we will present are chiefly a recovery of established knowledge (above all \cite{dWi} and subsequent citing literature), their straightforward derivation using the BV-BFV machinery is nontrivial and novel. Using a much weaker requirement than global hyperbolicity of the space-time manifold $M$ we are able to prove that the boundary theory is the symplectic reduction of the canonical Hamiltonian formalism of general relativity in the Arnowitt-Deser-Misner (ADM) formalism \cite{ADM}, even when symmetries are dynamically taken into account. The induced data, encoded in a boundary action, will encompass at the same time all the relevant Hamiltonian and momentum constraints, together with the structure of the residual gauge transformations on the boundary.

As a byproduct, we are able to address a recent question posed in \cite{BlohWein}, about the origin and nature of the constraint algebra, showing that it yields a nontrivial, non constant and yet linear coisotropic structure, not manifestly coming from a Lie algebra action. Indeed, the very compatibility of a BV structure in the bulk manifold with a BFV structure in the boundary means that the induced graded manifold given by the space of boundary fields  automatically represents the cohomological resolution of the phase space of the theory.

In Section \ref{Section:background} we will briefly review the ideas behind the BV-BFV formalism and the axioms one would like the gauge theory to satisfy in order to have access to the BV-BFV quantisation machinery \cite{CMR3}.

Further on we will show (Theorem \ref{Theo:ADM1}) that General Relativity in the Einstein-Hilbert formalism does indeed satisfy the axioms, for all pseudo-Riemannian manifolds of dimension $d+1\not=2$ with space/time-like boundaries, and for all Riemannian manifolds of dimension $d\not=2$. Notice that our result does not rule out the presence of null or type-changing boundaries, but the technique has to be adapted, and it will be subject of further research.

In Theorem \ref{Theo:ADMBSTR} we give explicit expressions in Darboux coordinates of the $(0)$-symplectic graded space of boundary fields. Its degree $0$ part is the well known symplectic space underlying the usual canonical description of Hamiltonian ADM GR, of which an explicit expression we give in Section \ref{Sect:ClassADM}. Again, the classical canonical analysis is strongly simplified when using the present approach, which substantially stems from the interpretation of the Noether 1-form as a differential form on the space of boundary fields.

The critical dimension $d=1$ is different from the outset, the Einstein equations are trivial and the group of conformal transformations of the metric has to be considered explicitly. It is interesting to notice that the procedure that we use to construct the boundary structure will be able to single out the critical dimension. The explicit expressions for the boundary data will indeed be singular in $d=1$, hinting at a different behaviour.

This is the first of a series of papers dealing with the BV-BFV approach to quantum gravity. In a subsequent work \cite{CS2} we will show a similar analysis of the Palatini-Holst formalism of GR, highlighting interesting similarities and differences. In \cite{CS3} we will perform the actual boundary BV quantisation of one dimensional examples of pure gravity and cosmological models. There we will show how time evolution can be recovered by means of an apparently silly breaking of diffeomorphism invariance, and how this can be interpreted as a natural operation.

Lastly, we would like to stress that this bulk-boundary correspondence, when extended to the level of ghost fields and antifields, hints at a strong regularity of the theory in terms of compatibility with the gauge symmetries. This is, possibly, to be taken as a necessary requirement for a theory to be regarded as \emph{neatly quantisable}, or at least as a mean of distinction between classical theories, otherwise indistinguishable.

\section{Classical BV and BFV formalisms}\label{Section:background}

We will consider here a general framework for gauge field theories. Consider a \emph{space of fields}, i.e. a (possibly infinite dimensional) $\mathbb{Z}$-graded odd-symplectic manifold $\mathcal{F}$ with a symplectic form $\Omega$ of degree $|\Omega|=k$ together with a local, degree $k+1$ functional $S$ of the fields and a finite number of their derivatives. 

The dynamical content of the theory is encoded in the Euler-Lagrange variational problem for the functional $S$. The $\mathbb{Z}$-grading is sometimes called \emph{ghost number}, but it will be often replaced by the computationally friendly \emph{total degree}, which takes into account the sum of different gradings when the fields belong to some graded vector space themselves (e.g. differential forms).

The symmetries are encoded by an odd vector field $Q\in \Gamma(T[1])F$ such that $[Q,Q]=0$. A vector field with such a property is said to be  \emph{cohomological}.

Among these pieces of data some compatibility conditions are required. We give the following definitions for different values of $k$. According to the convention that we adopt, ordinary symplectic manifolds are called $(0)$-symplectic in the graded setting. Our model for a bulk theory will be given by 
\begin{definition}\label{BVdef}
A BV manifold is the collection of data $(\mathcal{F}, S, Q,  \Omega)$ with $(\mathcal{F},\Omega)$ a $\mathbb{Z}$-graded $(-1)$-symplectic manifold, and $S$ and $Q$ respectively a degree $0$ function and a degree $1$ vector field on $\mathcal{F}$ such that
\begin{enumerate}
\item $\iota_{Q}\Omega=\delta S$, i.e. $S$ is the Hamiltonian function of $Q$
\item $[Q,Q]=0$, i.e. $Q$ is cohomological.
\end{enumerate}
The symplectic structure defines an odd-Poisson bracket $(,)$ on $\mathcal{F}$ and the above conditions together imply
\begin{equation}
(S,S)=0
\end{equation}
the \emph{Classical Master Equation} (CME).
\end{definition}

On the other hand, the model for a boundary theory, \emph{induced} in some sense to be explained, will be given by
\begin{definition}\label{BFVdef}
A BFV manifold is the collection of data $(\mathcal{F}^\partial, S^\partial, Q^\partial,  \omega^\partial)$ with $(\mathcal{F}^\partial,\omega^\partial)$ a $\mathbb{Z}$-graded $0$-symplectic manifold, and $S^\partial$ and $Q^\partial$ respectively a degree $1$ function and a degree $1$ vector field on $\mathcal{F}^\partial$ such that
\begin{enumerate}
\item $\iota_{Q^\partial}\omega^\partial=\delta S^\partial$, i.e. $S^\partial$ is the Hamiltonian function of $Q^\partial$
\item $[Q^\partial,Q^\partial]=0$, i.e. $Q^\partial$ is cohomological.
\end{enumerate}
This implies that $S^\partial$ satisfies the CME. If $\omega^\partial$ is exact, we will say that the BFV manifold is exact.
\end{definition}

These definitions abstract from the following prototype. Usually one starts from a classical theory, that is, for each manifold $M$ of a fixed dimension, the assignment of a local action functional $S_{\text{cl}}$ on some space of classical fields $F_M$ and a distribution in the bulk $D\subset TF_M$ encoding the symmetries, i.e. $L_X(S_{\text{cl}})=0$ for all $X\in \Gamma(D)$. The only requirement on $D$ for the formalism to make sense is that $D$ be involutive on the critical locus of $S_{\text{cl}}$. Notice that $D$ can be the distribution induced by a Lie algebra (group) action, in which case it is involutive on the whole space of fields. When this is the case we will talk of BRST formalism, even though the setting will be slightly different from the original one (for another account on the relationship between the BV and BRST formalism see, e.g. \cite{PavelTh}). We will be mainly interested in these types of theories, but for the sake of completeness we will sketch the general construction.

To construct a BV manifold on a closed manifold $M$ starting from classical data we must first extend the space of fields to accommodate the symmetries: $F_M\leadsto \mathcal{F}_M=T^*[-1]D[1]$. Symmetries are considered with a degree shift of $+1$, whereas the dualisation introduces a different class of fields (called anti-fields) with opposite parity to their conjugate fields, owing to the $-1$ shift in the cotangent functor. This yields a $(-1)$-symplectic manifold, which is a good candidate to be the space of fields we want to work with.

The classical action has to be extended as well to a new local functional on $\mathcal{F}_M$, and if we want this to satisfy the axioms of the BV manifold we must impose the CME on the extended action. This process will a priori need the introduction of higher degree fields to the space of fields in order to resolve, under some regularity assumptions, the relations among degree $1$ fields. This process of extension goes through co-homological perturbation theory \cite{stash97, Stash, BV81, FK, CMR2} and it will ensure us to end up with a BV manifold. However, for a theory which is BRST-like, the extension is determined by the following straightforward result \cite{BV81}:
\begin{theorem}\label{minimalBV}
If $D$ comes from a Lie algebra action, the functional $S_{BV}=S_{\text{cl}} + \langle \Phi^\dag,Q\Phi\rangle$ on the space of fields $\mathcal{F}_M=T^*[-1]D[1]$ satisfies the CME, where $\Phi$ is a multiplet of fields in $D[1]$, $\Phi^\dag$ denotes the corresponding multiplet of conjugate (anti-)fields and $Q$ is the degree $1$ vector field encoding the symmetries of $D$. 

$\mathcal{F}_M$ is then a $(-1)$-symplectic manifold and together with $S_{BV}$ and $Q$ it yields a BV manifold corresponding to a (minimal) extension of the classical theory.
\end{theorem}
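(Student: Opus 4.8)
The plan is to pass to explicit local field coordinates and reduce the Classical Master Equation to three classical identities. Write $\phi^i$ for the classical fields, $c^a$ for the ghosts spanning $\mathfrak{g}[1]$, and let $\phi^\dag_i$, $c^\dag_a$ be the conjugate antifields produced by the $T^*[-1]$ functor, so that the canonical odd-symplectic form is $\Omega=\int_M \delta\phi^\dag_i\,\delta\phi^i+\delta c^\dag_a\,\delta c^a$. The Lie algebra action is encoded by generators $\rho^i_a(\phi)$ satisfying the invariance $\rho^i_a\,\frac{\delta S_{\text{cl}}}{\delta\phi^i}=0$, and the cohomological vector field reads $Q\phi^i=\rho^i_a c^a$ and $Qc^a=-\tfrac{1}{2}f^a_{bc}\,c^b c^c$, with $f^a_{bc}$ the structure constants of $\mathfrak{g}$. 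In these coordinates $S_{BV}=S_{\text{cl}}(\phi)+\int_M \phi^\dag_i\rho^i_a c^a-\tfrac{1}{2}c^\dag_a f^a_{bc}\,c^b c^c$.

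First I would check the Hamiltonian condition $\iota_Q\Omega=\delta S_{BV}$. Contracting $\Omega$ with the prescribed action of $Q$ on $\phi^i$ and $c^a$ reproduces the pairing $\langle\Phi^\dag,Q\Phi\rangle$, while the requirement that the full contraction equal $\delta S_{BV}$ fixes the action of $Q$ on the antifields and identifies $Q$ as the Hamiltonian vector field of $S_{BV}$. Since for an odd-symplectic structure the Hamiltonian vector field of $\tfrac12(S_{BV},S_{BV})$ is $\tfrac12[Q,Q]$, on a connected space of fields the CME becomes equivalent to $[Q,Q]=0$; establishing either one then yields the other, so both assertions of the theorem follow from a single computation.

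Next I would compute $\tfrac12(S_{BV},S_{BV})$ directly from the bracket, using $\frac{\delta S_{BV}}{\delta\phi^\dag_i}=\rho^i_a c^a$, $\frac{\delta S_{BV}}{\delta c^\dag_a}=-\tfrac12 f^a_{bc}c^b c^c$, $\frac{\delta S_{BV}}{\delta\phi^i}=\frac{\delta S_{\text{cl}}}{\delta\phi^i}+\phi^\dag_j c^a\,\frac{\partial\rho^j_a}{\partial\phi^i}$, and $\frac{\delta S_{BV}}{\delta c^a}=\phi^\dag_i\rho^i_a-c^\dag_b f^b_{ac}c^c$. Sorting the outcome by antifield content produces exactly three families of terms, each killed by one classical input: the antifield-free piece $\rho^i_a c^a\,\frac{\delta S_{\text{cl}}}{\delta\phi^i}$ vanishes by gauge invariance of $S_{\text{cl}}$; the terms linear in $\phi^\dag$ cancel by the closure relation $\rho^j_a\,\frac{\partial\rho^i_b}{\partial\phi^j}-\rho^j_b\,\frac{\partial\rho^i_a}{\partial\phi^j}=f^c_{ab}\rho^i_c$, which expresses that $D$ integrates to a Lie algebra action; and the terms linear in $c^\dag$ and cubic in the ghosts, proportional to $f^b_{ac}f^a_{de}\,c^c c^d c^e$, vanish by the Jacobi identity for $\mathfrak{g}$.

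The main obstacle is not conceptual but is the careful bookkeeping of Koszul signs and of left versus right functional derivatives in this graded, infinite-dimensional setting: the three cancellations are exact only once the odd parity of the ghosts and the degree $-1$ shift of the antifields are tracked consistently, and once the boundary terms from integrating $\frac{\delta S_{\text{cl}}}{\delta\phi^i}$ by parts are discarded, which is legitimate because $M$ is closed here. Granting this, the identities give $(S_{BV},S_{BV})=0$, hence $[Q,Q]=0$; together with the Hamiltonian condition of the second step this exhibits $(\mathcal{F}_M,S_{BV},Q,\Omega)$ as a BV manifold. It is minimal because the constancy of the $f^a_{bc}$ lets the algebra close off-shell, so no fields of ghost number higher than one are needed.
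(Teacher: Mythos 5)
The paper does not actually prove this statement: it is quoted as a known ``straightforward result'' with a citation to Batalin--Vilkovisky \cite{BV81}, so there is no in-paper argument to compare against. Your computation is the standard one and is correct: writing $S_{BV}$ in local field coordinates, identifying $Q$ as the Hamiltonian vector field of $S_{BV}$ (which, since $(S_{BV},S_{BV})$ has odd degree and hence can only be the constant zero, makes the CME and $[Q,Q]=0$ equivalent), and then sorting $\tfrac12(S_{BV},S_{BV})$ by antifield content into three families killed respectively by gauge invariance of $S_{\text{cl}}$, closure of the generators, and the Jacobi identity. Two small points are worth making explicit if you write this up in full. First, in the application the paper actually needs, $D$ is generated by the infinite-dimensional Lie algebra $\Gamma(TM)$ acting by Lie derivative, so your $f^a_{bc}$ are not constants but the structure functions of the bracket of vector fields (bidifferential operators in DeWitt condensed notation); the three-way cancellation is unaffected, but ``constancy of the $f^a_{bc}$'' is not quite the right way to phrase why the extension is minimal --- the relevant facts are that the generators $\rho_a$ are linearly independent (no relations requiring ghosts-for-ghosts) and that the algebra closes off shell with $\phi$-independent structure functions, so no terms quadratic or higher in the antifields are needed. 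Second, the closure relation as you wrote it should be checked against the sign convention chosen for $Qc^a=-\tfrac12 f^a_{bc}c^bc^c$; with odd ghosts the $\phi^\dag$-linear terms automatically antisymmetrize in the Lie algebra indices, so the cancellation does go through, but the relative sign between the closure term and the $f$-term is exactly the kind of Koszul bookkeeping you flag, and it is where such a computation most often silently fails.
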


\subsection{BV-BFV formalism for gauge theories}

We will explain here in which sense Definition \ref{BFVdef} is a boundary model for Definition \ref{BVdef}. 

\begin{definition}\label{BV-BFV}
An exact BV-BFV pair is the data given by an exact BFV manifold $(\mathcal{F}^\partial,\omega^\partial=\delta \alpha^\partial, S^\partial, Q^\partial)$ together with a map $\pi\colon \mathcal{F}\longrightarrow \mathcal{F}^\partial$ such that 
\begin{enumerate}
\item The map ${\pi}$ is a surjective submersion, with $\Omega=\pi^*\omega^\partial$.
\item There is a vector field $Q\in\mathfrak{X}[1](\mathcal{F})$ such that $Q^\partial = \pi_*Q$ and $[Q,Q]=0$ is cohomological.
\item The BV-BFV formula $\iota_Q\Omega=\delta S + {\pi}^*{\alpha}^\partial$ is satisfied.
\end{enumerate}
Such a pair will be denoted by $(\mathcal{F},\mathcal{F}^\partial)_{\pi}$
\end{definition}

In field theory there are natural examples of BV-BFV pairs as in the following prototypical construction. Say that we start from the data defining a BV manifold, but this time we allow $M$ to have a boundary $\partial M$: the requirement that $\iota_{Q}\Omega=\delta S$ is (in general) no longer true. What will happen is that the integration by parts one usually has to take into account when computing $\delta S$ will leave some non zero terms \emph{on the boundary}. More precisely, consider the map 
\begin{equation}
\widetilde{\pi}_M:\mathcal{F}_M \longrightarrow \widetilde{\mathcal{F}}_{\partial M}
\end{equation}
that takes all fields and their jets to their restrictions to the boundary  (it is a surjective submersion). We can interpret the boundary terms as the pullback of a one form.  $\widetilde{\alpha}$ on $\widetilde{\mathcal{F}}$, namely
\begin{equation}
\iota_{Q}\Omega=\delta S + \widetilde{\pi}_M^*\widetilde{\alpha}
\end{equation}
We will call $\widetilde{\alpha}$ the \emph{pre-boundary} one form. In full generality $\widetilde{\alpha}$ is a connection on a line bundle, yet when $S$ is a function on the space of fields, $\widetilde{\alpha}$ is a globally well defined $1$-form.

Notice that if we are given this data, we can interpret this as a \emph{broken BV manifold}, with some relation to the boundary. We can in fact consider the pre-boundary two form $\widetilde{\omega}\coloneqq \delta\widetilde{\alpha}$ and if it is pre-symplectic (i.e. its kernel has constant rank) then we can define the true space of boundary fields $\mathcal{F}^\partial_{\partial M}$ to be the symplectic reduction of the space of pre-boundary fields, namely:
\begin{equation}\label{smoothred}
\mathcal{F}_{\partial M}^\partial=\qsp{\widetilde{\mathcal{F}}_{\partial M}}{\mathrm{ker}(\widetilde{\omega})}
\end{equation}
with projection to the quotient denoted by $\varpi:\widetilde{\mathcal{F}}_{\partial M}\longrightarrow \mathcal{F}_{\partial M}^\partial$. If all of the above assumptions are satisfied, the map $\pi_M\coloneqq \varpi\circ\widetilde{\pi}$ is a surjective submersion, the reduced two form $\omega^\partial\coloneqq \underline{\widetilde{\omega}}$ is a $0$-symplectic form, and the key result is

\begin{proposition}[CMR \cite{CMR2}]\label{CMRprop}
The cohomological vector field $Q$ projects to a cohomological vector field $Q^\partial$ on the space of boundary fields $\mathcal{F}_{\partial M}^\partial$. Moreover $Q^\partial$ is Hamiltonian for a function $S^\partial$, the boundary action.
\end{proposition}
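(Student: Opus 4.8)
The plan is to transport the cohomological and Hamiltonian structure from the bulk to the reduced boundary in two stages, first along $\widetilde{\pi}_M$ and then along $\varpi$, extracting the boundary action at the end from a grading argument. Since $Q$ is local, its value on a boundary jet depends only on boundary jets, so it is projectable along $\widetilde{\pi}_M$ to a vector field $\widetilde{Q}$ on $\widetilde{\mathcal{F}}_{\partial M}$. The starting point is then to apply $\delta$ to the defining relation $\iota_{Q}\Omega=\delta S+\widetilde{\pi}_M^*\widetilde{\alpha}$. Using $\delta\Omega=0$, $\delta^2=0$, $\widetilde{\omega}=\delta\widetilde{\alpha}$ and the graded Cartan formula, this yields
\begin{equation}
\mathcal{L}_{Q}\Omega=\widetilde{\pi}_M^*\widetilde{\omega}.
\end{equation}

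Next I would exploit $[Q,Q]=0$. Because $Q$ is odd, $\mathcal{L}_{Q}^{2}=\tfrac12\mathcal{L}_{[Q,Q]}=0$, so applying $\mathcal{L}_{Q}$ to the identity above and using that $Q$ and $\widetilde{Q}$ are $\widetilde{\pi}_M$-related gives $\widetilde{\pi}_M^*\,\mathcal{L}_{\widetilde{Q}}\widetilde{\omega}=0$. As $\widetilde{\pi}_M$ is a surjective submersion, $\widetilde{\pi}_M^*$ is injective on forms, whence $\mathcal{L}_{\widetilde{Q}}\widetilde{\omega}=0$. This is the key identity: together with $\delta\widetilde{\omega}=0$ and the constant-rank hypothesis, it shows that $\widetilde{Q}$ preserves the (integrable) distribution $\ker(\widetilde{\omega})$, since for $X\in\Gamma(\ker\widetilde{\omega})$ and any $Y$ one has $\widetilde{\omega}([\widetilde{Q},X],Y)=\pm(\mathcal{L}_{\widetilde{Q}}\widetilde{\omega})(X,Y)=0$. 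Hence $\widetilde{Q}$ descends through $\varpi$ to a well-defined $Q^\partial=\varpi_*\widetilde{Q}$ on $\mathcal{F}^\partial_{\partial M}$, and $[Q^\partial,Q^\partial]=0$ is inherited from $[\widetilde{Q},\widetilde{Q}]=0$, itself the projection of $[Q,Q]=0$; this settles the first assertion.

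It remains to show $Q^\partial$ is Hamiltonian. Reducing $\mathcal{L}_{\widetilde{Q}}\widetilde{\omega}=0$ gives $\mathcal{L}_{Q^\partial}\omega^\partial=0$, so $\beta\coloneqq\iota_{Q^\partial}\omega^\partial$ is $\delta$-closed (using $\delta\omega^\partial=0$). To produce a primitive I would invoke the Euler vector field $\mathcal{E}$ generating the $\mathbb{Z}$-grading on $\mathcal{F}^\partial_{\partial M}$, for which $\mathcal{L}_{\mathcal{E}}\gamma=(\deg\gamma)\,\gamma$ on any homogeneous form $\gamma$. Here $\beta$ has internal degree $k+\ell=0+1=1$, since $\omega^\partial$ is $0$-symplectic and $Q^\partial$ has degree $1$. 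Then $(k+\ell)\,\beta=\mathcal{L}_{\mathcal{E}}\beta=\delta\iota_{\mathcal{E}}\beta$ because $\beta$ is closed, so
\begin{equation}
\iota_{Q^\partial}\omega^\partial=\delta S^\partial,\qquad S^\partial=\tfrac{1}{k+\ell}\,\iota_{\mathcal{E}}\iota_{Q^\partial}\omega^\partial=\iota_{\mathcal{E}}\iota_{Q^\partial}\omega^\partial,
\end{equation}
a degree-$1$ function, which is the sought boundary action. Note it is the nonvanishing $k+\ell=1$ that makes this work.

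The main obstacle I anticipate is not the final grading argument, which is clean once $k+\ell\neq0$, but the descent of the symmetry to the reduced space: establishing that $\widetilde{Q}$ is genuinely basic for $\varpi$. This hinges entirely on the identity $\mathcal{L}_{\widetilde{Q}}\widetilde{\omega}=0$ (which encodes $[Q,Q]=0$ through the bulk relation) and on the constant-rank assumption guaranteeing that $\ker(\widetilde{\omega})$ integrates to an honest foliation with a smooth leaf space; without either, the projection $Q^\partial$ would not be well defined. A naive attempt to push $Q$ forward directly, bypassing the preservation of the characteristic distribution, would fail exactly at this point.
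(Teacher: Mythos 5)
Your argument is correct, and it is essentially the proof the paper defers to: the statement is quoted from CMR \cite{CMR2} without proof here, and the standard argument is exactly yours — differentiate the broken BV relation to get $\mathcal{L}_{Q}\Omega=\widetilde{\pi}_M^*\widetilde{\omega}$, use $[Q,Q]=0$ and injectivity of $\widetilde{\pi}_M^*$ to conclude that $\widetilde{Q}$ preserves $\ker(\widetilde{\omega})$ and hence descends, then extract the Hamiltonian by contracting with the Euler vector field. The latter device, $S^\partial=\iota_{\mathcal{E}}\iota_{Q^\partial}\omega^\partial$, is precisely the Roytenberg trick $\widetilde{S}=\iota_{\widetilde{Q}}\iota_{\widetilde{E}}\widetilde{\omega}$ that the paper itself employs in the proof of Theorem \ref{Theo:ADMBSTR}.
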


When this construction goes through, it associates to a manifold with boundary $(M,\partial M)$ a BV-BFV pair that depends on the manifold data. We will say that

\begin{definition}
A d-dimensional BV-BFV theory is an association of a BV-BFV pair $(\mathcal{F}_M,\mathcal{F}_{\partial M}^\partial)_{\pi_M}$ to a d-dimensional manifold with boundary $(M,\partial M)$.
\end{definition}

To summarise the construction and rephrase Proposition \ref{CMRprop} we have the following

\begin{theorem}[CMR]
Whenever the space $\mathcal{F}^\partial_{\partial M}$ of Equation \eqref{smoothred} is smooth, we are given the BV-BFV pair $(\mathcal{F}_M,\mathcal{F}_{\partial M}^\partial)_{\pi_M}$. The construction of a BV manifold for a local field theory on a closed manifold $M$ extends to a (possibly exact) BV-BFV theory on the manifold with boundary $(M,\partial M)$.
\end{theorem}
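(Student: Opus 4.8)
The plan is to assemble the prototypical construction described after Definition \ref{BV-BFV} into precisely the data of that definition, deferring the genuinely analytic content to Proposition \ref{CMRprop}. Starting from a BV manifold $(\mathcal{F}_M,S,Q,\Omega)$ in the sense of Definition \ref{BVdef}, but now with $\partial M\neq\emptyset$, I would first record that integration by parts obstructs the identity $\iota_Q\Omega=\delta S$ by a boundary term which, by locality, is the pullback $\widetilde{\pi}_M^*\widetilde{\alpha}$ of a one-form $\widetilde{\alpha}$ on $\widetilde{\mathcal{F}}_{\partial M}$; this defines the pre-boundary one-form through $\iota_Q\Omega=\delta S+\widetilde{\pi}_M^*\widetilde{\alpha}$. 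A degree count ($|\Omega|=-1$, $|Q|=+1$, so $|\iota_Q\Omega|=0=|\delta S|$) confirms $|\widetilde{\alpha}|=0$, hence $|\widetilde{\omega}|=0$, so the object we ultimately reduce is a candidate $0$-symplectic form and the target dualisation of Definition \ref{BFVdef} has the right degrees. Since the local action and field content are unchanged from the closed case, the same BV data produced by cohomological perturbation theory (or, in the BRST-like case, by Theorem \ref{minimalBV}) is what feeds this boundary analysis.

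Next I would carry out the symplectic reduction. Set $\widetilde{\omega}\coloneqq\delta\widetilde{\alpha}$; it is closed since $\delta^2=0$. The hypothesis that $\mathcal{F}^\partial_{\partial M}$ of \eqref{smoothred} be smooth is exactly the statement that $\ker\widetilde{\omega}$ is a subbundle of constant rank with smooth leaf space, involutivity of $\ker\widetilde{\omega}$ being automatic for a closed form: for $X,Y\in\Gamma(\ker\widetilde{\omega})$ one has, up to the usual graded signs, $\iota_{[X,Y]}\widetilde{\omega}=L_X\iota_Y\widetilde{\omega}-\iota_Y L_X\widetilde{\omega}=-\iota_Y(\iota_X\delta\widetilde{\omega}+\delta\iota_X\widetilde{\omega})=0$. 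Because $\widetilde{\omega}$ is then horizontal and invariant, hence basic for $\varpi$, it descends to a well-defined two-form $\omega^\partial\coloneqq\underline{\widetilde{\omega}}$ on the quotient which is nondegenerate by construction of the reduction and of degree $0$; this makes $(\mathcal{F}^\partial_{\partial M},\omega^\partial)$ a $\mathbb{Z}$-graded $0$-symplectic manifold, and $\pi_M\coloneqq\varpi\circ\widetilde{\pi}_M$ a surjective submersion as a composition of such, with $\varpi^*\omega^\partial=\widetilde{\omega}$ canonically induced by the bulk data. This secures the manifold of Definition \ref{BFVdef} and part (1) of Definition \ref{BV-BFV}.

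With the geometry in place, I would invoke Proposition \ref{CMRprop} to supply the cohomological and Hamiltonian structure: that $Q$ is $\varpi$-projectable (it preserves $\ker\widetilde{\omega}$) and descends to $Q^\partial=(\pi_M)_*Q$ with $[Q^\partial,Q^\partial]=0$, and that $Q^\partial$ is Hamiltonian for a degree $1$ boundary action $S^\partial$, i.e. $\iota_{Q^\partial}\omega^\partial=\delta S^\partial$. This is precisely conditions (1)--(2) of Definition \ref{BFVdef}, which together with nondegeneracy yield the CME $(S^\partial,S^\partial)=0$, and it furnishes condition (2) of Definition \ref{BV-BFV}. For the BV-BFV formula, $\widetilde{\pi}_M^*\widetilde{\alpha}$ is by definition the failure term in $\iota_Q\Omega=\delta S+\widetilde{\pi}_M^*\widetilde{\alpha}$; whenever $\widetilde{\alpha}$ is itself basic, so that $\widetilde{\alpha}=\varpi^*\alpha^\partial$ for a global one-form $\alpha^\partial$ on $\mathcal{F}^\partial_{\partial M}$, we get $\omega^\partial=\delta\alpha^\partial$ (the \emph{exact} case) and $\iota_Q\Omega=\delta S+\pi_M^*\alpha^\partial$, which is condition (3). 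In general $\widetilde{\alpha}$ is only a connection on a line bundle, accounting for the qualifier ``(possibly exact)''.

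The step I expect to be the main obstacle is the smoothness of the reduction itself --- that $\ker\widetilde{\omega}$ has locally constant rank and that $\mathcal{F}^\partial_{\partial M}$ is a genuine (Hausdorff, smooth) manifold rather than a singular quotient. In the abstract statement this is taken as a hypothesis, so here it costs nothing; but it is exactly the point that must be verified by hand in any concrete application, and, as the introduction anticipates, it is what will separate the admissible dimensions and signatures from the degenerate ones. Everything else is the bookkeeping of degrees, the standard fact that a closed form of constant rank descends to a symplectic form on its leaf space, and the transport of the cohomological and Hamiltonian data guaranteed by Proposition \ref{CMRprop}.
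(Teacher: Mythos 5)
Your proposal is correct and follows essentially the same route as the paper, which presents this theorem as a summary of the preceding prototypical construction (pre-boundary one-form from the boundary terms of $\delta S$, the presymplectic form $\widetilde{\omega}=\delta\widetilde{\alpha}$, and the reduction \eqref{smoothred}) combined with the projectability and Hamiltonianity statements delegated to Proposition \ref{CMRprop}, itself cited from CMR. Your added details --- the degree count, the Cartan-formula argument for involutivity of $\ker\widetilde{\omega}$, and the identification of the exact case with $\widetilde{\alpha}$ being basic --- are consistent with the paper's sketch and with how the concrete verifications are carried out later for the ADM theory.
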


\begin{remark}
Notice that in Definition \ref{BFVdef} it is possible to relax the requirement that the BFV 2-form $\widetilde{\omega}$ be nondegenerate and introduce the notion of \emph{pre-BFV} manifolds. When that is the case we may define \emph{pre-BV-BFV} pairs to be modeled over these more general pre-BFV manifolds. Observe that a pre-BV-BFV pair $(\mathcal{F},\widetilde{\mathcal{F}})_{\widetilde\pi}$ such that the form $\mathrm{ker}(\widetilde{\omega})$ is a subbundle gives naturally rise to a BV-BFV pair on the symplectic reduction $\mathcal{F}^\partial = \widetilde{\mathcal{F}}\big\slash\mathrm{ker}(\widetilde{\omega})$, by composing $\widetilde{\pi}$ with the symplectic reduction map.
\end{remark}

Some compatibility between bulk and boundary can always be achieved in terms of the space of pre-boundary fields, on which the differential of the Noether $1$-form is degenerate. The crucial assumption is that the symplectic reduction of this $2$-form should be smooth. In this paper we show that this is satisfied in the case of the Einstein-Hilbert formulation of General Relativity.

The advantage of such a point of view is at least twofold. First of all, as we just saw, the formalism is \emph{large enough} to be able to describe consistently what happens both in the bulk and in the boundary. On the other hand it is flexible enough to allow for symmetries that are more general than a Lie group action. For instance it is possible to accomodate symmetries that close only on shell (e.g. Poisson sigma model) or symmetries whose generators are not linearly independent, where higher relations among the relations are required (e.g. BF theory or other theories involving $(d>1)$-differential forms.

The BV manifold that we have constructed in Theorem \ref{minimalBV} when a gauge theory of the BRST-kind was given is sometimes called the \emph{minimal BV extension} of the gauge theory. When a non trivial boundary is allowed, we will use this minimal extension as the starting point for the BV-BFV analysis.

What one aims to establish is whether this minimal BV extension on the bulk does indeed yield a BV-BFV theory. We will accomplish this for the special case of General Relativity in the EH-ADM formalism  in what follows.

\section{General Relativity in the BV formalism}\label{Section:ADM}
General Relativity (GR) is a classical theory modelling the gravitational interaction between physical objects. It can be expressed as the variational problem for a functional on some space of classical fields. The usual presentation of GR in the second order formalism requires one to consider the basic field to be a pseudo-Riemannian metric $g$ on a \emph{spacetime} manifold $M$, i.e. a $d+1$ dimensional manifold, possibly with non trivial boundary $\partial M$, and the Einstein-Hilbert action (up to constants)
\begin{equation}
S^{EH}_{cl}=\int\limits_{M}(R[g] - 2\Lambda)\,\sqrt{-\mathrm{g}}\,d^{d+1}x
\end{equation}
where $R[g]$ is the Ricci scalar of the pseudo-Riemannian metric $g\in\mathcal{PR}_{(d,1)}(M)$ with signature $(d,1)$, $\mathrm{g}\coloneqq   \mathrm{det}(g)$ and $\Lambda$ is the cosmological constant. The Einstein equations, which determine the dynamics of $g$, are derived from the action as Euler-Lagrange equations for the variational problem.

The very principle of general covariance, from which the whole of General Relativity stems, states that the theory should be the same in all reference frames or, equivalently, that the equations of motion should not depend on the choice of coordinates on $M$, and this is mathematically encoded in the action being invariant under the full group of diffeomorphism of the manifold $M$.

To make contact with the general theory outlined in Section \ref{Section:background} we will consider the distribution defining the (infinitesimal) gauge symmetries $\mathcal{D}$ to be the distribution on the space of fields given by all vector fields on $M$, acting on the fields \emph{via} Lie derivative. This will be encoded in the BV formalism by the introduction of a new field $\xi\in \Gamma\left(T[1]M\right)$, which is an odd vector field representing such infinitesimal symmetries, and the space of fields will be enlarged accordingly, as we shall see in detail later on.

It turns out that a clean result can be obtained, when some slightly restricting conditions are imposed on the boundary $\partial M$: namely, we will require that $\partial M$ be either space-like or time-like, and this will allow us a clever rewriting of the action in a \emph{boundary-friendly} fashion. 

To achieve this in field theory, since the pseudo-Riemannian structure is not fixed, we will need to restrict the space of fields to only those metrics whose restriction to the boundary has either space-like or time-like signature. Such a space of fields will be denoted by $\mathcal{PR}^{\partial M}_{(d,1)}(M)$.

Observe that in the literature (e.g. \cite{ADM,dWi}), it is customary to require that the spacetime manifold $M$ be \emph{globally hyperbolic}, or that it has the product structure $\Sigma \times \RR$ for $\Sigma$ an embedded space/time-like submanifold of $M$. This would indeed be a much stronger requirement, and in fact we only require it be true in a neighborhood of the boundary. Notice that any globally hyperbolic Lorentzian structure for the space-time $M$ is in particular contained in $\mathcal{PR}^{\partial M}_{(d,1)}(M)$.

\subsection{ADM decomposition and boundary structure}
Among the possible boundaries that one can consider there is the class of non null boundary, described locally by a submanifold of the form $x^n=\text{const}$. As we mentioned, this is equivalent to (or rather means) asking that the space of pseudo-Riemannian structures on the manifold with boundary $M$ be limited to those metrics whose restriction to the boundary has either time-like or space-like signature.

When this is the case, and when the $x^n$ component corresponds to a signature $-\epsilon$, the metric and its inverse are re-written in the form:
\begin{equation}\label{ADMmetric}
\begin{array}{c}
g_{\mu\nu}=\epsilon\left(\begin{array}{cc} -(\eta^2 - \beta_a\beta^a) & \beta_b \\ \beta_a & \gamma_{ab}\end{array}\right)\\
\\
g^{\mu\nu}=\epsilon\eta^{-2}\left(\begin{array}{cc} -1 & \beta^b \\ \beta^a & \eta^2\gamma^{ab}-\beta^a\beta^b\end{array}\right)\end{array}
\end{equation}
where $\eta$ and $\beta_a$ are functions for all $a=1,2,3$. For  simplicity $\epsilon=1$ if $x^n$ is a timelike direction, that is to say when the boundary is spacelike.

With this decomposition we have that $\sqrt{-\mathrm{g}}=\eta\sqrt{|\gamma|_\epsilon}$, with $\gamma = \mathrm{det}\gamma_{ij}$, and $|\gamma|_\epsilon$ means that we shall consider the absolute value of the determinant if needed (if $\epsilon=-1$). We will understand this fact from now on and simply write $\sqrt\gamma$. The classical Einstein-Hilbert action gets rewritten as
\begin{equation}\begin{aligned}\label{fullADMaction}
S=\int\limits_M&\Big\{\eta\sqrt{\gamma}(\epsilon(K_{ab}K^{ab} - K^2) +R^\partial -2\Lambda)- 2\partial_n(\sqrt{\gamma}K) +2\partial_a(\sqrt{\gamma} K \beta^a - \sqrt{\gamma}\gamma^{ab}\partial_b\eta)\Big\}d^{d+1}x
\end{aligned}\end{equation}
where we define $K_{ab}$,  the second fundamental form of the boundary submanifold, and its trace $K$ by means of the boundary covariant derivative $\nabla^\partial$ as follows
\begin{align}\label{Kappa e Ti}
K_{ab}&=\frac12 \eta^{-1}(2\nabla^\partial_{(a}\beta_{b)} - \partial_n\gamma_{ab})=\frac12\eta^{-1}T_{ab}\\
K&=\gamma^{ab}K_{ab}=\frac12\eta^{-1}\gamma^{ab}T_{ab}=\frac12\eta^{-1}T
\end{align}
while $T_{ab}$ and $T$ are introduced for later convenience.
We will redefine the ADM Lagrangian as
\begin{equation}
L_{ADM}\coloneqq \eta\sqrt{\gamma}(\epsilon(K_{ab}K^{ab} - K^2) +R^\partial -2\Lambda)
\end{equation}
The classical space of fields in this case is then simply given by $\mathcal{F}_{\text{cl}}=\mathcal{PR}^{\partial M}_{(d,1)}(M)$ the space of pseudo Riemannian metrics on $M$ with signaure (d,1), and space/time-like signature when restricted to the boundary.

\begin{remark}
The total normal derivative appearing in \eqref{fullADMaction} is the so called \emph{Gibbons-Hawking-York} boundary term. In our framework, it will only affect the boundary $1$-form by an exact term, and will not interfere with the rest of the boundary structure. The action we will consider from now on is 
\begin{equation}\label{ADMaction}
S_{ADM}=\int_M L_{ADM}
\end{equation}
\end{remark}

\subsection{Classical boundary structure}\label{Sect:ClassADM}
To start off, we will consider first the classical (i.e. non-BV) structure that is induced on the boundary. This is often called \emph{canonical analysis}, and one replaces the Lagrangian description with the Hamiltonian in the phase space of the system. The advantage in applying our variational approach to the classical case as well, is that we are able to perform the symplectic reduction of the space of classical pre-boundary fields, to find a well defined symplectic structure on the space of classical boundary fields, i.e. the phase space, encoding the canonical relations in a straightforward way.

\begin{proposition}\label{Prop:ClAdm}
The space of classical boundary fields for General Relativity in the ADM formalism for any dimension $d+1\not=2$ is an exact symplectic manifold. In a local chart the symplectic form reads
\begin{equation}
\omega^\partial=\epsilon \int\limits_{\partial M} \delta\boldsymbol{\gamma}^{ab}  \delta\boldsymbol{\Pi}_{ab}
\end{equation}
where the projection map to the boundary fields reads:
\begin{equation}
\pi_M\colon\begin{cases}
\boldsymbol{\gamma}_{ij}=\gamma_{ij}\\
\boldsymbol{\Pi}_{lm}=\frac{\sqrt{\gamma}}{2}\left(\tJ{lm} - \gamma_{lm}\gamma^{ij}\tJ{ij}\right)
\end{cases}
\end{equation}
with
\begin{equation}
\tJ{lm}=\eta^{-1}\left(J_{lm} - 2\nabla_{(l}\beta_{m)}\right)
\end{equation}
and $J_{ab}$ is the first normal jet of $\gamma_{ab}$ evaluated at the boundary.
\end{proposition}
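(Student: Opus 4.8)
The plan is to compute the variation $\delta S_{ADM}$ directly, read off the pre-boundary one-form $\widetilde{\alpha}$ from the boundary terms produced by integration by parts, and then perform the symplectic reduction of $\widetilde{\omega}=\delta\widetilde{\alpha}$ as prescribed in \eqref{smoothred}. The first observation is that in $L_{ADM}=\eta\sqrt{\gamma}\,(\epsilon(K_{ab}K^{ab}-K^2)+R^\partial-2\Lambda)$ the only field carrying a normal derivative is $\gamma_{ab}$, entering through $K_{ab}=\frac12\eta^{-1}(2\nabla^\partial_{(a}\beta_{b)}-\partial_n\gamma_{ab})$; the lapse $\eta$ and shift $\beta_a$ appear only algebraically or with tangential derivatives, and $R^\partial$ is intrinsic to $\gamma$. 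Hence integration by parts in the normal direction produces a single boundary contribution,
\[
\widetilde{\alpha}=\int_{\partial M}\frac{\partial L_{ADM}}{\partial(\partial_n\gamma_{ab})}\,\delta\gamma_{ab}\,d^dx=\int_{\partial M}\Pi^{ab}\,\delta\gamma_{ab}\,d^dx,
\]
where a short computation using $\delta K_{ab}=-\tfrac12\eta^{-1}\delta(\partial_n\gamma_{ab})$ gives $\Pi^{ab}=-\epsilon\sqrt{\gamma}\,(K^{ab}-K\gamma^{ab})$, which matches $\boldsymbol{\Pi}$ after lowering indices and using $\tJ{ab}=-2K_{ab}$. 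The tangential integrations by parts coming from $R^\partial$ and from $\nabla^\partial_{(a}\beta_{b)}$ contribute only to the bulk Euler--Lagrange equations (and to total tangential divergences that vanish over the closed $\partial M$), so they do not enter $\widetilde{\alpha}$.

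Next I would form $\widetilde{\omega}=\delta\widetilde{\alpha}=\int_{\partial M}\delta\Pi^{ab}\wedge\delta\gamma_{ab}$ on the space of pre-boundary fields, coordinatised by $(\gamma_{ab},\eta,\beta_a,J_{ab})$ with $J_{ab}=\partial_n\gamma_{ab}|_{\partial M}$. A tangent vector lies in $\ker\widetilde{\omega}$ precisely when both its $\delta\gamma_{ab}$ component and $\delta\Pi^{ab}$ vanish along it. The crucial step is to analyse the linear map $J_{ab}\mapsto\Pi^{ab}$ at fixed $(\gamma,\eta,\beta)$: writing $\Pi^{ab}=\tfrac12\epsilon\sqrt{\gamma}\,\eta^{-1}\,G^{abcd}(J_{cd}-2\nabla^\partial_{(c}\beta_{d)})$ with the DeWitt supermetric $G^{abcd}=\tfrac12(\gamma^{ac}\gamma^{bd}+\gamma^{ad}\gamma^{bc})-\gamma^{ab}\gamma^{cd}$, one computes $\gamma_{ab}G^{abcd}=(1-d)\gamma^{cd}$, so $G$ is invertible (its inverse carrying a factor $\tfrac{1}{d-1}$ in the trace part) if and only if $d\neq 1$. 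Consequently, for $d+1\neq 2$ any variation of $\eta$ and $\beta_a$ can be compensated by a unique variation of $J_{ab}$ keeping $\Pi^{ab}$ fixed, so $\ker\widetilde{\omega}$ is spanned pointwise by the $\delta\eta$ and $\delta\beta_a$ directions and has constant rank $1+d$. This is exactly the hypothesis ensuring that $\widetilde{\omega}$ is pre-symplectic and that the reduction is smooth.

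Finally I would identify the reduced space. Since $\ker\widetilde{\omega}$ is generated by $\eta$ and $\beta_a$, the invertibility of $G$ lets one trade the fibre coordinate $J_{ab}$ for $\Pi^{ab}$, which is manifestly $\ker\widetilde{\omega}$-basic, so the quotient $\mathcal{F}^\partial_{\partial M}=\qsp{\widetilde{\mathcal{F}}_{\partial M}}{\ker(\widetilde{\omega})}$ is coordinatised by $(\boldsymbol{\gamma}_{ab},\boldsymbol{\Pi}_{ab})$. The reduced two-form is then the canonical $\omega^\partial=\epsilon\int_{\partial M}\delta\boldsymbol{\gamma}^{ab}\,\delta\boldsymbol{\Pi}_{ab}$, and exactness follows at once since $\widetilde{\alpha}$ descends to the primitive $\alpha^\partial=\int_{\partial M}\boldsymbol{\Pi}_{ab}\,\delta\boldsymbol{\gamma}^{ab}$. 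I expect the main obstacle to be precisely the invertibility analysis of the DeWitt supermetric: it is simultaneously the engine that guarantees the constant rank, hence smoothness, of the reduction and the mechanism that singles out the critical dimension $d=1$, where the trace mode of $G$ degenerates and the reduced phase space ceases to be given by this clean Darboux pair. Secondary care is needed in tracking the factors of $\epsilon$ and the index conventions relating $\Pi^{ab}$, $\tJ{ab}$, and $\boldsymbol{\Pi}_{ab}$.
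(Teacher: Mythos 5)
Your proposal is correct and follows essentially the same route as the paper: read off the pre-boundary one-form $\widetilde{\alpha}=\int_{\partial M}\Pi^{ab}\delta\gamma_{ab}$ from the normal integration by parts, observe that the kernel of $\widetilde{\omega}$ consists of vectors annihilating both $\gamma_{ab}$ and $\Pi^{ab}$ (hence is freely generated by the $\eta$- and $\beta_a$-directions with the $J_{ab}$-component determined), and reduce to the Darboux pair $(\boldsymbol{\gamma},\boldsymbol{\Pi})$ with the descended primitive giving exactness. Your packaging of the kernel analysis through the invertibility of the DeWitt supermetric $G^{abcd}$ for $d\neq 1$ is a slightly more structural phrasing of the same computation the paper carries out by writing the kernel generators explicitly and integrating their flows, and it correctly isolates $d=1$ as the critical dimension.
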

\begin{proof}
Consider the variation of the action $S_{ADM}$: it splits in a bulk term and a boundary term. The latter is interpreted as a $1$-form $\widetilde{\alpha}$ on the space of pre-boundary fields $\widetilde{\mathcal{F}}^{\text{cl}}$, which is given by restrictions of the bulk metric and its normal jets $J_{ab}\coloneqq \partial_n\gamma_{ab}\big|_{\partial M}$ to $\partial M$: 
\begin{equation}\label{ADM1form}
\widetilde{\alpha}=2\epsilon\int\limits_{\partial M}\left\{\delta(\sqrt{\gamma}\gamma^{ab})K_{ab} - \frac{\sqrt{\gamma}}{2}\delta\gamma^{ab}K_{ab}\right\}
\end{equation}
and the two form $\widetilde{\omega}=\delta\widetilde{\alpha}$, using the definitions \eqref{Kappa e Ti} of $K_{ab}$ and ${T_{ab}}$ is
\begin{multline}
\widetilde{\omega}=\epsilon\int\limits_{\partial M}\Bigg\{ \delta\eta^{-1}\delta(\sqrt{\gamma}\gamma^{ab})T_{ab} - \eta^{-1}\delta(\sqrt{\gamma}\gamma^{ab})\delta T_{ab} - \eta^{-1}\frac{\delta\sqrt{\gamma}\delta\gamma^{ab}}{2}T_{ab} - \delta\eta^{-1}\frac{\sqrt{\gamma}}{2}\delta\gamma^{ab}T_{ab} + \eta^{-1}\frac{\sqrt{\gamma}}{2}\delta\gamma^{ab}\delta T_{ab}\Bigg\}
\end{multline}
Both $K$ and $T$ are functions of $g$ and $J$. Observe that the transversal jets $J_{na}$ are not present because of the clever rewriting of the action.


After some straightforward calculations one is able to gather that the kernel of the two form is given, for $d\not=1$ by 
\begin{align}
(X_\gamma)^{ab}&=0 \label{Xgamma}\\
(X_T)_{lm}&=- \eta(X_{\eta^{-1}})T_{lm} \label{XKappa}
\end{align}
but the $(X_{\beta_m})$ component of a vector field in the kernel turns out to be free, as well as the $\eta^{-1}$ component. In fact, equation \eqref{XKappa} can be unfolded to yield:
\begin{equation}
(X_J)_{lm}= - \eta(X_{\eta^{-1}})J_{lm} + 2\nabla_{(l}(X_\beta)_{m)} + 2\eta(X_{\eta^{-1}})\nabla_{(l}\beta_{m)}
\end{equation}

The generators in the kernel are 
\begin{subequations}\label{kernelCLADM}\begin{align}
\mathbb{E}^{-1}&=(X_{\eta^{-1}})\pard{}{\eta^{-1}} - \eta(X_{\eta^{-1}})J_{lm} \pard{}{J_{lm}} + 2\eta(X_{\eta^{-1}})\nabla_{(l}\beta_{m)}\pard{}{J_{lm}}\\
\mathbb{B}_l&= (X_\beta)_l\pard{}{\beta_l} + 2\nabla_{(l}(X_\beta)_{m)}\pard{}{J_{lm}}
\end{align}\end{subequations}
and thus, solving the differential equations given by the kernel vector fields \eqref{kernelCLADM} together with Equation \eqref{Xgamma} yields the symplectic reduction map:
\begin{equation}
\pi\colon\begin{cases}
\widetilde{\gamma}_{ij}=\gamma_{ij}\\
\widetilde{J}_{lm}=\eta^{-1}\left(J_{lm} - 2\nabla_{(l}\beta_{m)}\right)
\end{cases}
\end{equation}
It is a matter of a simple check to verify that the one form
\begin{equation}
\alpha^\partial=\epsilon\int\limits_{\partial M}\frac{\sqrt{\widetilde{\gamma}}}{2}\left(\delta\widetilde{\gamma}^{ij}\widetilde{\gamma}_{ij}\widetilde{\gamma}^{lm}\widetilde{J}_{lm}  - \delta\widetilde{\gamma}^{lm}\widetilde{J}_{lm} \right)
\end{equation}
pulls back along $\pi$ to the pre-boundary one form $\widetilde{\alpha}$: 
\[
\pi^*\alpha^\partial = \widetilde{\alpha}
\]
which is horizontal, i.e. $\iota_{\mathbb{E}^{-1}}\widetilde{\alpha}=\iota_{\mathbb{B}_l}\widetilde{\alpha}=0$. This implies that the symplectic manifold $\left(\mathcal{F}^\partial,\delta \alpha^\partial\right)$ is exact. $\pi_M$ is then obtained composing $\pi$ with the restriction to the pre-boundary fields $\widetilde{\pi}$.

Introducing the new variables $\boldsymbol{\gamma}^{ab}\equiv\widetilde{\gamma}^{ab}$ and $\boldsymbol{\Pi}_{lm}=\frac{\sqrt{\gamma}}{2}\left(\tJ{lm} - \gamma_{lm}\gamma^{ij}\tJ{ij}\right)$ we have
\begin{equation}
\alpha^\partial=-\epsilon\int\limits_{\partial M} \delta\boldsymbol{\gamma}^{ab} \boldsymbol{\Pi}_{ab} \Longrightarrow \omega^\partial=\epsilon \int\limits_{\partial M} \delta\boldsymbol{\gamma}^{ab}  \delta\boldsymbol{\Pi}_{ab}
\end{equation}
which is the symplectic form in the space of classical boundary fields, expressed in local Darboux coordinates. 
\end{proof}

\begin{remark}
We managed to recover the phase space description of General Relativity in the symplectic framework. Notice that in the non-BV setting the compatibility with the boundary structure is encoded in the boundary term $\pi_M^*\alpha^\partial_{\partial M}$, a failure of the variation of the action from being given by the Euler-Lagrange equations alone. When turning to the BV theory we will see how this compatibility can be enriched to yield the full fundamental formula \eqref{QADM}.
\end{remark}

\begin{remark}
Observe that we have performed a symplectic reduction that encodes the usual canonical analysis of General Relativity (in the ADM formalism). Our boundary field $\boldsymbol{\Pi}_{ab}$ is a projected version of the usual (i.e. literature) \emph{momentum} coordinate conjugate to $\gamma^{ab}$ (let us call it $p_{ab}=\pi^*\boldsymbol{\Pi}_{ab}$), with the difference that in the present case the \emph{conjugacy} is in the symplectic sense, as we quotient by the kernel of the pre-symplectic form $\widetilde{\omega}$.
\end{remark}

In what follows we will show how this can be extended to the BV setting, which explicitly encodes the symmetries. This will allow us to recover the usual energy and momentum constraints in a straightforward way, still holding on to the clean symplectic description of the phase space.

\subsection{BV-BFV ADM theory}
Recalling the general theory we outlined in Section \ref{Section:background}, in order to perform a consistent analysis of the theory including the symmetries, one has to find the correct BV data. The geometric information we need is the distribution in the space of fields that generates the symmetries.

In our case, General Relativity is invariant under the action of the whole diffeomorphism group of the space-time manifold $M$. The theory can be treated as a BRST-like theory since the symmetry algebra $\Gamma(TM)$ closes everywhere in the space of fields, and we can use Theorem \ref{minimalBV} to extend the classical ADM action to its BV-extended counterpart. Indeed we consider the following action:
\begin{equation}
S_{ADM}^{BV}=S_{ADM} - \int\limits_M\left(L_\xi g\right)g^\dag + \int\limits_M\frac{1}{2}\iota_{[\xi,\xi]}\xi^\dag\equiv S_{ADM} + S_{BV}
\end{equation}
with $S_{ADM}$ as in \eqref{ADMaction}, since the action of the cohomological vector field $Q$ reads
\begin{equation}
\begin{aligned}
Qg&=L_\xi g\\
Q\xi&=\frac{1}{2}[\xi,\xi]
\end{aligned}
\end{equation} 
and $\xi\in \Gamma(T[1]M)$ a generic vector field, declared to have ghost number $1$. The space of fields is then given by the shifted cotangent bundle: 
\begin{equation}
\mathcal{F}_{ADM}\coloneqq T^*[-1]\left[\mathcal{PR}^{\partial M}_{d+1}(M)\oplus \Gamma\left(T[1]M\right)\right].
\end{equation}
equipped with the canonical odd-symplectic form $\Omega_{BV}$. Our first result in this setting is the following

\begin{theorem}\label{Theo:ADM1}
For all $d\not=1$, the data $(\mathcal{F}_{ADM}, S_{ADM}^{BV},Q,\Omega_{BV})$ induce an exact BV-BFV theory. The induced data on the boundary will be denoted by $(\mathcal{F}^\partial,S^\partial,Q^\partial, \omega^\partial)$. In particular we have that $Q^\partial=\pi_M{}_* Q$, and $\iota_{Q^\partial}\omega^\partial = \delta S^\partial$.
\end{theorem}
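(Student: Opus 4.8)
The plan is to run the prototype construction of Section \ref{Section:background} for the full BV data, reducing all the BV-BFV axioms of Definition \ref{BV-BFV} to a single constant-rank statement for the pre-boundary two-form. First I would compute the difference $\iota_Q \Omega_{BV} - \delta S_{ADM}^{BV}$ directly. By Theorem \ref{minimalBV} the bulk part vanishes (it is the classical master equation $(S,S)=0$ on the closed manifold, which holds because $S_{ADM}$ is diffeomorphism invariant and $S_{BV}$ has the minimal form), so the entire difference is supported on $\partial M$. Since $S_{ADM}^{BV}$ is a genuine function, this boundary term is the pullback $\widetilde{\pi}_M^*\widetilde{\alpha}$ of a globally defined pre-boundary one-form. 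Concretely $\widetilde{\alpha}$ splits as the classical one-form \eqref{ADM1form} of Proposition \ref{Prop:ClAdm} plus the boundary pieces produced by integrating by parts in the ghost sector: the term $-\intl_M (L_\xi g)g^\dag$ contributes expressions pairing the normal component of the ghost $\xi$ against $g^\dag$ and the metric restriction, while $\frac12\intl_M \iota_{[\xi,\xi]}\xi^\dag$ contributes quadratic-in-$\xi$ pieces paired with $\xi^\dag$.

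Second, I would decompose every field in the ADM frame adapted to $\partial M$: the metric as in \eqref{ADMmetric} (lapse $\eta$, shift $\beta_a$, boundary metric $\gamma_{ab}$ and its first normal jet $J_{ab}$), and the ghost and antifields into normal and tangential parts. In these variables $\widetilde{\alpha}$ is an explicit one-form on the space $\widetilde{\mathcal{F}}$ of boundary restrictions and normal jets, and I would form $\widetilde{\omega} = \delta\widetilde{\alpha}$. The heart of the argument is then to compute $\mathrm{ker}(\widetilde{\omega})$ and show it is a subbundle of constant rank. The classical computation in the proof of Proposition \ref{Prop:ClAdm} is the degree-zero skeleton here: the generators $\mathbb{E}^{-1}$ and $\mathbb{B}_l$ persist, and one must adjoin the new kernel directions coming from $\xi$ and the antifields. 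I expect the quotient to retain $\gamma_{ab}$ together with its reduced conjugate momentum $\boldsymbol{\Pi}_{ab}$, as well as the surviving ghost/antifield combinations, while $\eta$, $\beta_a$ and their jet partners are killed by the reduction.

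Having identified the kernel, the third step is the symplectic reduction \eqref{smoothred}: solving the kernel equations produces the reduction map $\pi$, which on the classical fields reproduces $\gamma_{ab}\mapsto\tgam_{ab}$ and $J_{ab}\mapsto\tJ{ab}=\eta^{-1}(J_{ab}-2\nabla_{(a}\beta_{b)})$ and is supplemented by the reduced ghost/antifield combinations. Smoothness of $\mathcal{F}^\partial=\widetilde{\mathcal{F}}\big\slash\mathrm{ker}(\widetilde{\omega})$ is exactly where $d\neq1$ enters: the coefficients organizing the kernel degenerate in $d=1$, so for $d\neq1$ the kernel distribution has locally constant rank and the quotient is a smooth $0$-symplectic manifold. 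Everything else is then formal: Proposition \ref{CMRprop} guarantees that $Q$ descends to a cohomological $Q^\partial=\pi_M{}_*Q$ which is Hamiltonian for a boundary action $S^\partial$, giving $\iota_{Q^\partial}\omega^\partial=\delta S^\partial$, and exactness follows by exhibiting, as in Proposition \ref{Prop:ClAdm}, a globally defined primitive $\alpha^\partial$ with $\pi^*\alpha^\partial=\widetilde{\alpha}$ and $\omega^\partial=\delta\alpha^\partial$.

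The step I expect to be the main obstacle is the constant-rank verification for $\widetilde{\omega}$ in the extended setting. The subtlety is that the ghost and antifield contributions to $\widetilde{\alpha}$ couple back into the classical sector through the ADM data, notably through $\eta$ and $\beta_a$, so $\mathrm{ker}(\widetilde{\omega})$ is not merely the direct sum of the classical kernel and an independent ghost/antifield kernel; one has to check that these cross terms neither raise nor lower the rank and that the subbundle structure survives for $d\neq1$. Since all other claims follow formally from Definition \ref{BV-BFV} and Proposition \ref{CMRprop}, this analytic point is where the whole statement rests.
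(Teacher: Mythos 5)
Your proposal follows essentially the same route as the paper's proof: read off the pre-boundary one-form $\widetilde{\alpha}$ from the boundary terms of $\delta S_{ADM}^{BV}$, compute $\widetilde{\omega}=\delta\widetilde{\alpha}$ in the ADM variables, identify the kernel (where the ghost/antifield cross-terms couple into the classical sector and the coefficients degenerate precisely at $d=1$), reduce, and obtain exactness by checking $\widetilde{\alpha}$ is basic, with the descent of $Q$ and the Hamiltonian property supplied by Proposition \ref{CMRprop}. You correctly single out the constant-rank verification of $\mathrm{ker}(\widetilde{\omega})$ as the substantive step, which is exactly where the paper's explicit (and lengthy) computation is concentrated.
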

\begin{proof}
The variation of $S_{ADM}^{BV}$ induces the following pre-boundary $1$-form on the space of pre-boundary fields $\widetilde{\mathcal{F}}_{ADM}$, which is given by restrictions of the bulk fields  to $\partial M$, together with the normal jets of the boundary metric $J_{ab}\coloneqq \partial_n\gamma_{ab}\big|_{\partial M}$:
\begin{multline}\label{BVADMoneform}
\widetilde{\alpha}_{ADM}=2\epsilon\int\limits_{\partial M}\eta\left\{\delta(\sqrt{\gamma}\gamma^{ab})K_{ab} - \frac{\sqrt{\gamma}}{2}\delta\gamma^{ab}K_{ab}\right\}\\
+2\epsilon\int\limits_{\partial M}\left((-\eta^2 + \beta_a\beta^a)\delta\xi^ng^\dag{}^{nn} + \beta_a\delta\xi^ng^\dag{}^{an} + \beta_a\delta\xi^ag^\dag{}^{nn} + \gamma_{ab}\delta\xi^{(a}g^\dag{}^{b)n}\right)\\
-\epsilon \int\limits_{\partial M}\left(\xi^n\delta(-\eta^2 + \beta_a\beta^a)g^\dag{}^{nn} +2\xi^n\delta \beta_a g^\dag{}^{an} + \xi^n\delta\gamma_{ab}g^\dag{}^{ab}\right)  - \int\limits_{\partial M}\xi^n\delta\xi^\rho\chi_\rho
\end{multline}
and $2$-form $\widetilde\omega=\delta\widetilde{\alpha}_{ADM}$:
\begin{multline}
\widetilde\omega=\int\limits_{\partial M}\epsilon\delta\left\{\eta^{-1}\delta(\sqrt{\gamma}\gamma^{ab})K_{ab} -\eta^{-1} \frac{\sqrt{\gamma}}{2}\delta\gamma^{ab}K_{ab}\right\}  -  \delta\xi^n\delta\xi^\rho\chi_\rho  + \xi^n\delta\xi^\rho\delta\chi_\rho \\
+ \epsilon\Big(\delta(-\eta^2 + \beta_a\beta^a)\delta\xi^ng^\dag{}^{nn} +2(-\eta^2+ \beta_a\beta^a)\delta\xi^n\delta g^\dag{}^{nn}  + 2\beta_a\delta\xi^n\delta g^\dag{}^{an}\\
+ 2\delta\beta_a\delta\xi^ag^\dag{}^{nn} + 2\beta_a\delta\xi^a \delta g^\dag{}^{nn} + 2\delta\gamma_{ab}\delta\xi^{(a} g^\dag{}^{b)n} + 2\gamma_{ab}\delta\xi^{(a}\delta g^\dag{}^{b)n}\Big) \\
-\epsilon\Big(\xi^n\delta(-\eta^2 + \beta_a\beta^a)\delta g^\dag{}^{nn} +2\xi^n\delta\beta_a\delta g^\dag{}^{an} + \delta\xi^n\delta\gamma_{ab}g^\dag{}^{ab} + \xi^n\delta\gamma_{ab}\delta g^\dag{}^{ab}\Big)
\end{multline}
where we fixed the volume form $v=dx^n\wedge v^\partial$. 

More explicitly, the space of pre-boundary fields in degree zero is given by $3$-metrics on the boundary $\gamma$ and their first normal jets $J$ times (d+1) copies of functions on the boundary representing the components $(\eta,\beta_a)$ of the metric $g_{\mu\nu}$ restricted to the boundary. In degree one it contains vector fields tangent to the boundary and the function of degree one $\xi^n$, which is the normal component of the vector field $\xi$ restricted to the boundary. Finally it contains the symmetric covariant 2-tensors $g^\dag{}^{\mu\nu}$ restricted to the boundary in degree $-1$ and functions $\chi_\rho$ of degree $-2$ on the boundary, coming from the components of the restriction to the boundary of the anti-ghost $\chi$. With an abuse of notation we will denote the restrictions of the fields to the boundary with the same symbol.

Recalling that $K_{ab}$ is a function of $J_{ab}\coloneqq  \partial_n\gamma_{ab}|_{\partial M}$, it is just a matter of lengthy computations to show that $\widetilde{\omega}$ is presymplectic: indeed,  excluding the case $d=1$, the equations defining the kernel can be solved to yield:
\begin{subequations}\begin{align}
(X_J)_{lm}=&+\eta^{-1}(X_\eta) J_{lm} + 2\nabla_{(l}(X_\beta)_{m)} - 2\eta^{-1}(X_\eta)\nabla_{(l}\beta_{m)} + \frac{4}{\sqrt{\gamma}}(X_\eta)\left(\frac{1}{d-1}\gamma_{lm}\beta_a - \beta_{(l}\gamma_{m)a}\right)g^\dag{}^{an}\xi^n \\\notag
&-\frac{4}{\sqrt{\gamma}}\eta\left(\frac{1}{d-1}\gamma_{lm}(X_\beta)_a - (X_\beta)_{(l}\gamma_{m)a}\right)g^\dag{}^{an}\xi^n +\frac{2}{\sqrt{\gamma}}(X_\eta)\left(\frac{1}{d-1}\gamma_{lm}\gamma_{ab} - \gamma_{la}\gamma_{bm}\right)g^\dag{}^{ab}\xi^n\\\notag
&-\frac{2}{\sqrt{\gamma}}\eta\left(\frac{1}{d-1}\gamma_{lm}\gamma_{ab} - \gamma_{al}\gamma_{bm}\right)(X_{g^\dag})^{ab}\xi^n\\
(X_g^\dag)^{bn}=&-\gamma^{ab}(X_\beta)_ag^\dag{}^{nn} +  \eta^{-1}\beta^b(X_\eta)g^\dag{}^{nn}  + \epsilon\eta^{-3}\beta^b\beta^a\chi_a(X_\eta)\xi^n - \frac\epsilon2\eta^{-2}\beta^b\beta^a(X_\chi)_a\xi^n   \\\notag
&- \frac\epsilon2 \eta^{-2}\beta^b(X_\beta)_c\gamma^{cd}\chi_d\xi^n  - \epsilon\eta^{-3}\beta^b(X_\eta)\chi_n\xi^n  +\frac\epsilon2\eta^{-2}\beta^b(X_\chi)_n\xi^n  - \frac\epsilon2 \eta^{-1}\gamma^{ba}\chi_a(X_\eta)\xi^n  + \frac\epsilon2\gamma^{ba}(X_\chi)_a\xi^n\\
(X_{g^\dag})^{nn}=&-\eta^{-1}(X_\eta)g^\dag{}^{nn}  - \epsilon\eta^{-3}(X_\eta)\beta^a\chi_a\xi^n  + \frac\epsilon2 \eta^{-2} \beta^a(X_\chi)_a\xi^n  + \frac\epsilon2\eta^{-2}(X_\beta)_b\gamma^{ab}\chi_a\xi^n   \\\notag
&+ \epsilon\eta^{-3}(X_\eta)\chi_n\xi^n  - \frac\epsilon2\eta^{-2}(X_\chi)_n\xi^n\\
(X_\xi)^a = &+\beta^a\eta^{-1}(X_\eta)\xi^n - \gamma^{ab}(X_\beta)_b\xi^n\\
(X_\xi)^n =& -\eta^{-1}(X_\eta)\xi^n 
\end{align}\end{subequations}

and the kernel is generated by the (vertical) vector fields:

\begin{subequations}\label{kernelBVADM}\begin{align}
\mathbb{X}_{(n)}=&(X_\chi)_n\pard{}{\chi_n}  - \frac\epsilon2\eta^{-2}(X_\chi)_n\xi^n\pard{}{g^\dag{}^{nn}}  + \frac\epsilon2\beta^b\eta^{-2}(X_\chi)_n\xi^n\pard{}{g^\dag{}^{bn}} \\ 
\mathbb{X}_{(a)} =& (X_\chi)_a\pard{}{\chi_a}  + \frac\epsilon2 \eta^{-2} \beta^a(X_\chi)_a\xi^n\pard{}{g^\dag{}^{nn}}  -\left(\frac\epsilon2\eta^{-2}\beta^b\beta^a(X_\chi)_a\xi^n - \frac\epsilon2\gamma^{ba}(X_\chi)_a\xi^n\right)\pard{}{g^\dag{}^{bn}}\\\label{Betavert}
\mathbb{B}_{(a)}=&(X_\beta)_a\pard{}{\beta_a}  - \gamma^{ab}(X_\beta)_a\xi^n \pard{}{\xi^b}  + \frac\epsilon2\eta^{-2}\gamma^{ab}(X_\beta)_a\chi_b\xi^n\pard{}{g^\dag{}^{nn}} \\\notag
&+\left(2\nabla_{(l}(X_\beta)_{m)} + \frac{4\epsilon}{\sqrt{\gamma}}\eta\left((X_\beta)_{(l}\gamma_{m)a} - \frac{1}{d-1}\gamma_{lm}(X_\beta)_a \right)g^\dag{}^{an}\xi^n \right)\pard{}{J_{lm}}\\\notag
& + \left( -\frac\epsilon2\eta^{-2} \beta^b\gamma^{cd}(X_\beta)_c\chi_d\xi^n -\gamma^{ab}(X_\beta)_ag^\dag{}^{nn}\right)\pard{}{g^\dag{}^{bn}}\\
\mathbb{G}^\dag{}^{(ab)} =& (X_{g^\dag})^{ab}\pard{}{g^\dag{}^{ab}} +\frac{2\epsilon}{\sqrt{\gamma}}\eta\left( \gamma_{al}\gamma_{bm} - \frac{1}{d-1}\gamma_{lm}\gamma_{ab}\right)(X_{g^\dag})^{ab}\xi^n \pard{}{J_{lm}}\\
\mathbb{E}=&(X_\eta)\pard{}{\eta}  -\eta^{-1}(X_\eta)\xi^n \pard{}{\xi^n} + \beta^a\eta^{-1}(X_\eta)\xi^n\pard{}{\xi^a}  \\\notag
& - \eta^{-1}(X_\eta)g^\dag{}^{nn}\pard{}{g^\dag{}^{nn}} - \epsilon\eta^{-3}\left(\beta^a\chi_a - \chi_n \right)(X_\eta)\xi^n \pard{}{g^\dag{}^{nn}} \\\notag
&  -\left( \epsilon\eta^{-3}\beta^b\chi_n - \epsilon\eta^{-3}\beta^b\beta^a\chi_a  + \frac\epsilon2 \eta^{-1}\gamma^{ba}\chi_a\right)(X_\eta)\xi^n\pard{}{g^\dag{}^{bn}}\\\notag
&- \frac{4\epsilon}{\sqrt{\gamma}}(X_\eta)\left(\beta_{(l}\gamma_{m)a} - \frac{1}{d-1}\gamma_{lm}\beta_a \right)g^\dag{}^{an}\xi^n\pard{}{J_{lm}}\\\notag
&-\frac{2\epsilon}{\sqrt{\gamma}}(X_\eta)\left( \gamma_{la}\gamma_{bm} - \frac{1}{d-1}\gamma_{lm}\gamma_{ab}\right)g^\dag{}^{ab}\xi^n\pard{}{J_{ab}}\\\notag
& +  \eta^{-1}\beta^a(X_\eta)g^\dag{}^{nn}\pard{}{g^\dag{}^{an}}+\eta^{-1}(X_\eta)\left(J_{lm} - 2\nabla_{(l}\beta_{m)}\right)\pard{}{J_{lm}}
\end{align}\end{subequations}
It is easy to check that the boundary one form \eqref{BVADMoneform} is annihilated by all vertical vector fields \eqref{kernelBVADM}, and it is therefore basic, proving the exactness of the BV-BFV pair and concluding the proof.
\end{proof}

Not only can we prove the existence of  a well defined exact BFV structure on the boundary $\partial M$, but it is possible to express it in Darboux coordinates. The explicit expression in a local chart is established by the following

\begin{theorem}\label{Theo:ADMBSTR}
The surjective submersion $\pi_M:\mathcal{F}_{ADM}\longrightarrow \mathcal{F}_{ADM}^\partial$ is given by the local expression:
\begin{equation}\label{projADM}
\pi_M\colon\begin{cases}
\boldsymbol{\Pi}_{lm}&=\frac{\sqrt{\tgam}}{2}\left(\tJ{lm} - \tgam_{lm}\tgam^{ij}\tJ{ij}\right)\\
\boldsymbol{\varphi}_n&=-2\left\{\eta g^\dag{}^{nn}  - \frac{\epsilon}{2}\eta^{-1}\left( \beta^a\chi_a- \chi_n \right)\xi^n \right\}\\
\boldsymbol{\varphi}_a&=2\,\gamma_{ab}\left\{g^\dag{}^{bn} + \gamma^{ba}\beta_a g^\dag{}^{nn}  -  \frac\epsilon2\gamma^{ba}\chi_a\xi^n \right\}\\
\boldsymbol{\xi}^{b}&=\xi^b + \gamma^{ba}\beta_a\xi^n \\
\boldsymbol{\xi}^{n}&=\eta\,\xi^n\\
\boldsymbol{\gamma}_{ab}&=\gamma_{ab}
\end{cases}
\end{equation}
with
\begin{align*}
\tJ{lm}&=\Bigg\{\eta^{-1}\left(J_{lm} - 2\nabla_{(l}\beta_{m)}\right) - \frac{2\epsilon}{\sqrt\gamma}\left(\gamma_{al}\gamma_{bm} - \frac{1}{d-1}\gamma_{lm}\gamma_{ab}\right)g^\dag{}^{ab}\xi^n\\
&-\frac{4}{\sqrt\gamma}\,\epsilon\left(\beta_{(l}\gamma_{m)b}-\frac{1}{d-1}\gamma_{lm}\beta_b\right)g^\dag{}^{bn}\xi^n - \frac{2\epsilon}{\sqrt\gamma}\left(\beta_{(l}\beta_{m)} - \frac{1}{d-1}\gamma_{lm}\beta_b\beta^b\right)g^\dag{}^{nn}\xi^n\Bigg\}
\end{align*}
The boundary symplectic structure on the space of boundary fields in these coordinates ($\rho=\{n,a\}$) reads:
\begin{equation}
\omega^\partial=\epsilon\int\limits_{\partial M} \delta \boldsymbol{\gamma}^{ab}\delta\boldsymbol{\Pi}_{ab} + \delta\boldsymbol{\xi}^\rho\delta\boldsymbol{\varphi}_{\rho}.
\end{equation}
Moreover, the boundary action is given by the expression
\begin{align}\notag
S^\partial=&\int\limits_{\partial M}\Bigg\{\frac{\epsilon}{\sqrt{\boldsymbol{\gamma}}}\left(\boldsymbol{\Pi}^{ab}\boldsymbol{\Pi}_{ab} - \frac{1}{d-1}\boldsymbol{\Pi}^2\right) + \sqrt{\boldsymbol{\gamma}}\left(R^\partial -2\Lambda\right) + \epsilon\partial_a\left(\boldsymbol{\xi}^{a}\boldsymbol{\varphi}_n\right) - \epsilon\boldsymbol{\gamma}^{ab}\boldsymbol{\varphi}_b\partial_a\boldsymbol{\xi}^{n}\Bigg\}\boldsymbol{\xi}^{n}\\ \label{ADMBoundaction}
+&\int\limits_{\partial M}\Bigg\{ - \partial_c\left(\boldsymbol{\gamma}^{cd}\boldsymbol{\Pi}_{da}\right) -(\partial_a\boldsymbol{\gamma}^{cd})\boldsymbol{\Pi}_{cd} +\epsilon\partial_c \left(\boldsymbol{\xi}^{c}\boldsymbol{\varphi}_a\right) \Bigg\} \boldsymbol{\xi}^{a}.
\end{align}
\end{theorem}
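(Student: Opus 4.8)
The plan is to make explicit the symplectic reduction whose existence is already guaranteed by Theorem \ref{Theo:ADM1}, proceeding exactly as in the classical case of Proposition \ref{Prop:ClAdm} but now carrying along the ghost and antifield sector. First I would determine coordinates on the reduced space $\mathcal{F}^\partial_{ADM}$ as those functions on the pre-boundary space that are annihilated by every kernel generator \eqref{kernelBVADM}, i.e. by solving the characteristic ODEs associated with $\mathbb{X}_{(n)}$, $\mathbb{X}_{(a)}$, $\mathbb{B}_{(a)}$, $\mathbb{G}^\dag$ and $\mathbb{E}$. The combinations $\boldsymbol{\gamma}_{ab}=\gamma_{ab}$, $\boldsymbol{\xi}^n=\eta\,\xi^n$ and $\boldsymbol{\xi}^b=\xi^b+\gamma^{ba}\beta_a\xi^n$ are the evident invariants assembled from lapse, shift and ghost, and I would check directly that each generator kills them. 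The nontrivial content is to integrate the flows of $\mathbb{E}$ (the lapse-rescaling direction, which mixes $J_{lm}$, $g^\dag$ and $\chi$) and of $\mathbb{B}_{(a)}$, $\mathbb{G}^\dag$ in order to produce the invariant momentum $\boldsymbol{\Pi}_{lm}$, the corrected normal jet $\widetilde{J}_{lm}$, and the antighost-momenta $\boldsymbol{\varphi}_\rho$. I expect $\boldsymbol{\Pi}$ to reduce to the classical expression of Proposition \ref{Prop:ClAdm} plus the $\xi^n$-linear antifield corrections recorded in \eqref{projADM}, and $\boldsymbol{\varphi}_\rho$ to be precisely the combinations of $g^\dag$ and $\chi$ conjugate to $\boldsymbol{\xi}^\rho$.

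With the invariant coordinates in hand, I would rewrite the pre-boundary one-form $\widetilde{\alpha}_{ADM}$ of \eqref{BVADMoneform}, shown to be horizontal in the proof of Theorem \ref{Theo:ADM1}, entirely in terms of $\boldsymbol{\gamma}$, $\boldsymbol{\Pi}$, $\boldsymbol{\xi}$, $\boldsymbol{\varphi}$. Since it annihilates the whole kernel it descends to a one-form $\alpha^\partial$ with $\pi^*\alpha^\partial=\widetilde{\alpha}_{ADM}$, and I would identify
\[
\alpha^\partial=-\epsilon\int\limits_{\partial M}\left(\delta\boldsymbol{\gamma}^{ab}\boldsymbol{\Pi}_{ab}+\delta\boldsymbol{\xi}^\rho\boldsymbol{\varphi}_\rho\right),
\]
so that $\omega^\partial=\delta\alpha^\partial$ displays the claimed Darboux form and, by construction, re-proves exactness of the BFV structure. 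This is essentially a change-of-variables computation, the only delicate point being that the several antifield terms of \eqref{BVADMoneform} must reorganise exactly into $\delta\boldsymbol{\xi}^\rho\boldsymbol{\varphi}_\rho$.

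For the boundary action I would invoke Proposition \ref{CMRprop}: $Q^\partial=\pi_{M*}Q$ is Hamiltonian for $S^\partial$, so $\iota_{Q^\partial}\omega^\partial=\delta S^\partial$. Concretely I would push the bulk $Q$ (with $Qg=L_\xi g$, $Q\xi=\tfrac12[\xi,\xi]$ and the induced transformations on the antifields) forward to the Darboux coordinates, which is well defined precisely because those coordinates are kernel-invariant, and then integrate $\iota_{Q^\partial}\omega^\partial$ to recover \eqref{ADMBoundaction}. Equivalently, since $\omega^\partial=\delta\alpha^\partial$ is exact, one has $L_{Q^\partial}\alpha^\partial=\delta\left(S^\partial+\iota_{Q^\partial}\alpha^\partial\right)$, from which $S^\partial$ can be read off. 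I anticipate the $\boldsymbol{\xi}^n$-smeared part to assemble into the Hamiltonian constraint $\tfrac{\epsilon}{\sqrt{\boldsymbol{\gamma}}}\left(\boldsymbol{\Pi}^{ab}\boldsymbol{\Pi}_{ab}-\tfrac{1}{d-1}\boldsymbol{\Pi}^2\right)+\sqrt{\boldsymbol{\gamma}}\left(R^\partial-2\Lambda\right)$ and the $\boldsymbol{\xi}^a$-smeared part into the momentum constraint $-\partial_c\left(\boldsymbol{\gamma}^{cd}\boldsymbol{\Pi}_{da}\right)-(\partial_a\boldsymbol{\gamma}^{cd})\boldsymbol{\Pi}_{cd}$, the remaining $\boldsymbol{\varphi}$-terms encoding the action of residual diffeomorphisms on the ghosts.

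The main obstacle I expect is computational rather than conceptual: integrating the coupled kernel distribution \eqref{kernelBVADM} in closed form — in particular disentangling the simultaneous action of $\mathbb{E}$ and $\mathbb{B}_{(a)}$ on $J_{lm}$ and on $g^\dag$, $\chi$ — and then verifying that the pushed-forward $Q^\partial$ is genuinely the Hamiltonian vector field of the proposed $S^\partial$, equivalently that $S^\partial$ satisfies the boundary Classical Master Equation $(S^\partial,S^\partial)=0$. The ubiquitous factor $1/(d-1)$ makes the exclusion $d\not=1$ transparent and offers a convenient consistency check on the entire reduction.
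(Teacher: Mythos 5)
Your proposal follows the paper's proof essentially step for step: integrating the kernel flows \eqref{kernelBVADM} (first $\mathbb{B}_{(a)}$ to kill $\beta$, then $\mathbb{X}_{(\rho)}$, $\mathbb{G}^\dag$ and finally $\mathbb{E}$) to obtain the invariant combinations, verifying $\pi^*\alpha^\partial=\widetilde{\alpha}_{ADM}$ for an ansatz one-form, passing to the Darboux coordinates $(\boldsymbol{\gamma},\boldsymbol{\Pi},\boldsymbol{\xi},\boldsymbol{\varphi})$, and extracting $S^\partial$ from the Hamiltonian condition for the projected cohomological vector field. The only inessential divergence is at the last step, where the paper avoids both the explicit projection of $\widetilde{Q}$ and the integration of $\iota_{Q^\partial}\omega^\partial=\delta S^\partial$ by exploiting the degree-$1$ homogeneity of the boundary action: it computes $\widetilde{S}=\iota_{\widetilde{Q}}\iota_{\widetilde{E}}\widetilde{\omega}$ with $\widetilde{E}$ the Euler vector field on the pre-boundary fields (after obtaining the components of $Q$ on the antifields via a Gauss--Codazzi rewriting of the Einstein tensor) and observes that $\widetilde{S}=\pi^*S^\partial$ for degree reasons.
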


\begin{proof}
Using the vertical vector fields in \eqref{kernelBVADM} to eliminate $\beta_a, \chi_\rho$ and $g^\dag{}^{ab}$ one is able to find an explicit section of the symplectic reduction $\pi:\widetilde{\mathcal{F}}_{ADM}\longrightarrow \mathcal{F}^\partial_{ADM}$. First of all, use $\mathbb{B}$ to set $\beta_a=0$, this implies $(X_\beta)_a=-\beta_a^0$ together with $\beta_a(t)=(1-t)\beta_a^0$ and we have the first two differential equations:
\begin{subequations}\begin{align}
\dot{\xi}^b &= +\gamma^{ab}\beta_a^0\xi^n\\
\dot{g}^\dag{}^{nn}&={ - }\frac{\epsilon}{2}\eta^{-2}\gamma^{ab}\beta_a^0\chi_b\xi^n
\end{align}\end{subequations}
that are easily solved to yield
\begin{subequations}\begin{align}
\xi(t)=&\xi_0^b + \gamma^{ba}\beta^0_a\xi^nt\\\label{g+nnpart}
g^\dag{}^{nn}(t)=&g_0^\dag{}^{nn} { - }\frac{\epsilon}{2}\eta^{-2}\gamma^{ab}\beta_a^0\chi_b\xi^n t
\end{align}\end{subequations}
we use \eqref{g+nnpart} and the time rule for $\beta_a(t)$ to solve equation
\begin{align*}
\dot{g}^\dag{}^{nb}=&{ + }\frac\epsilon2\eta^{-2} \beta^b\gamma^{cd}\beta^0_c\chi_d\xi^n + \gamma^{ab}\beta^0_ag^\dag{}^{nn}=\gamma^{ba}\beta^0_ag_0^\dag{}^{nn} { - }\frac{\epsilon}{2}\eta^{-2}\gamma^{ba}\beta^0_a\gamma^{cd}\beta_c^0\chi_d\xi^n(2t - 1)\\
{g}^\dag{}^{nb}(t)=&{g}_0^\dag{}^{nb} + \gamma^{ba}\beta^0_ag_0^\dag{}^{nn}t { - }\frac{\epsilon}{2}\eta^{-2}\gamma^{ba}\beta^0_a\gamma^{cd}\beta_c^0\chi_d\xi^n(t^2 - t)
\end{align*}
together with
\begin{align*}
\dot{J}_{lm}=&-\left(2\nabla_{(l}\beta^0_{m)} + \frac{4\epsilon}{\sqrt{\gamma}}\eta\left(\beta^0_{(l}\gamma_{m)a} - \frac{1}{d-1}\gamma_{lm}\beta^0_a \right)g^\dag{}^{an}\xi^n \right)\\
=&-2\nabla_{(l}\beta^0_{m)} - \frac{4\epsilon}{\sqrt{\gamma}}\eta\left(\beta^0_{(l}\gamma_{m)a} - \frac{1}{d-1}\gamma_{lm}\beta^0_a \right)g_0^\dag{}^{an}\xi^n + \\
&- \frac{2\epsilon}{\sqrt\gamma}\left(\beta^0_{(l}\beta^0_{m)} - \frac{1}{d-1}\gamma_{lm}\beta^0_b\beta_0^b\right)g^\dag_0{}^{nn}\xi^n t\\
{J}_{lm}=&-2\nabla_{(l}\beta^0_{m)}t - \frac{4\epsilon}{\sqrt{\gamma}}\eta\left(\beta^0_{(l}\gamma_{m)a} - \frac{1}{d-1}\gamma_{lm}\beta^0_a \right)g_0^\dag{}^{an}\xi^nt + \\
&- \frac{2\epsilon}{\sqrt\gamma}\left(\beta^0_{(l}\beta^0_{m)} - \frac{1}{d-1}\gamma_{lm}\beta^0_b\beta_0^b\right)g^\dag_0{}^{nn}\xi^n t^2
\end{align*}
So we can set the temporary value of our fields at $t=1$ to be
\begin{subequations}
\begin{align}\notag
\hat{J}_{lm}=&-2\nabla_{(l}\beta^0_{m)} - \frac{4\epsilon}{\sqrt{\gamma}}\eta\left(\beta^0_{(l}\gamma_{m)a} - \frac{1}{d-1}\gamma_{lm}\beta^0_a \right)g_0^\dag{}^{an}\xi^n + \\
&- \frac{2\epsilon}{\sqrt\gamma}\left(\beta^0_{(l}\beta^0_{m)} - \frac{1}{d-1}\gamma_{lm}\beta^0_b\beta_0^b\right)g^\dag_0{}^{nn}\xi^n \\
\hat{g}^\dag{}^{nb}=&g^\dag{}^{nb} + \gamma^{ba}\beta_ag^\dag{}^{nn}\\
\hat{g}^\dag{}^{nn}=&g^\dag{}^{nn} { - }\frac{\epsilon}{2}\eta^{-2} \gamma^{ab}\beta_a\chi_b\xi^n
\end{align}\end{subequations}

Now we can turn to the vector fields $\mathbb{X}_\rho$ and use them to set $\chi_\rho=0$ at some value of the internal evolution parameter $s$. As usual we impose $(X_\chi)_\rho=-\chi_\rho^0$ and $\chi_\rho(t)=(1-t)\chi_\rho^0$. The new equations are
\begin{align*}
\dot{g}^\dag{}^{nn}=&{ + }\frac{\epsilon}{2}\eta^{-2}\chi^0_n\xi^n \\
\dot{g}^\dag{}^{nb}=&{ - }\frac{\epsilon}{2}\gamma^{ba}\chi_a^0\xi^n
\end{align*}
which will yield an additional correction to the temporary value of our fields:
\begin{subequations}
\begin{align}
\hat{\hat{g}}^\dag{}^{nn}=&g^\dag{}^{nn} { + }\frac{\epsilon}{2}\eta^{-2}\left(\chi_n - \gamma^{ab}\beta_a\chi_b\right)\xi^n\\
\hat{\hat{g}}^\dag{}^{nb}=&g^\dag{}^{nb} + \gamma^{ba}\beta_ag^\dag{}^{nn} { - }\frac{\epsilon}{2}\gamma^{ba}\chi_a\xi^n
\end{align}\end{subequations}

Similar is what happens when we use $\mathbb{G}^\dag{}^{ab}$, for we get the equation
\begin{equation*}
\dot{J}_{lm}=-\frac{2\epsilon}{\sqrt{\gamma}}\eta\left(\gamma_{al}\gamma_{bm} - \frac{1}{d-1}\gamma_{ab}\gamma_{lm}\right)g_0^\dag{}^{ab}\xi^n
\end{equation*}
that will correct the temporary value of $\hat{J}_{lm}$ to
\begin{align}\notag
\hat{\hat{J}}_{lm}=&-2\nabla_{(l}\beta^0_{m)} - \frac{4\epsilon}{\sqrt{\gamma}}\eta\left(\beta^0_{(l}\gamma_{m)a} - \frac{1}{d-1}\gamma_{lm}\beta^0_a \right)g_0^\dag{}^{an}\xi^n + \\\notag
&- \frac{2\epsilon}{\sqrt\gamma}\left(\beta^0_{(l}\beta^0_{m)} - \frac{1}{d-1}\gamma_{lm}\beta^0_b\beta_0^b\right)g^\dag_0{}^{nn}\xi^n \\
&-\frac{2\epsilon}{\sqrt{\gamma}}\eta\left(\gamma_{al}\gamma_{bm} - \frac{1}{d-1}\gamma_{ab}\gamma_{lm}\right)g_0^\dag{}^{ab}\xi^n
\end{align}

Finally, we use the vector field $\mathbb{E}$ to set $\eta=1$. This implies that the time law for $\eta$ be given by $\eta(t)=(1-\eta_0)t + \eta_0$ and $(X_\eta)=1-\eta_0$. The associated equations read
\begin{align*}
\dot{\xi}^n =& -\frac{1-\eta_0}{(1-\eta_0)t + \eta_0}\xi^n\\
\dot{g}^\dag{}^{nn}=&-\frac{1-\eta_0}{(1-\eta_0)t + \eta_0}{g}^\dag{}^{nn}\\
\dot{J}_{lm}=&\frac{1-\eta_0}{(1-\eta_0)t+\eta_0}J_{lm}
\end{align*}
yielding, at time $t=1$, the following corrections to the fields: $\widetilde{\xi}^n=\eta\xi^n$, $\tgd{nn}=\eta \hat{\hat{g}}^\dag{}^{nn}$ and $\tJ{lm}=\eta^{-1}\hat{\hat{J}}_{lm}$. Putting everything together we get that the symplectic reduction map reads: 
\begin{equation}\pi\colon 
\begin{cases}
\tJ{lm}&=\eta^{-1}\left(J_{lm} - 2\nabla_{(l}\beta_{m)}\right) - \frac{2\epsilon}{\sqrt\gamma}\left(\gamma_{al}\gamma_{bm} - \frac{1}{d-1}\gamma_{lm}\gamma_{ab}\right)g^\dag{}^{ab}\xi^n\\
&-\frac{4\epsilon}{\sqrt\gamma}\left(\beta_{(l}\gamma_{m)b}-\frac{1}{d-1}\gamma_{lm}\beta_b\right)g^\dag{}^{bn}\xi^n - \frac{2\epsilon}{\sqrt\gamma}\left(\beta_{(l}\beta_{m)} - \frac{1}{d-1}\gamma_{lm}\beta_b\beta^b\right)g^\dag{}^{nn}\xi^n\\
\tgd{nn}&=\eta g^\dag{}^{nn} +\frac{\epsilon}{2}\eta^{-1}\left(\chi_n - \beta^a\chi_a \right)\xi^n \\
\tgd{bn}&=g^\dag{}^{bn} + \gamma^{ba}\beta_a g^\dag{}^{nn} + \frac\epsilon2\gamma^{ba}\chi_a\xi^n \\
\txi{b}&=\xi^b + \gamma^{ba}\beta_a\xi^n \\
\txi{n}&=\eta\xi^n\\
\tgam_{ab}&=\gamma_{ab}
\end{cases}
\end{equation}

The boundary $1$-form $\alpha^\partial$ will be given by the ansatz
\begin{equation}
\alpha^\partial=\epsilon\int\limits_{\partial M}\left\{\frac{\sqrt{\tgam}}{2}\left( \delta\tgam^{ab}\tgam_{ab}\tgam^{lm}\tJ{lm} - \delta \tgam^{lm}\tJ{lm} \right) - 2\delta\txi{n}\tgd{nn} + 2\gamma_{ab}\delta\txi{a}\tgd{bn}\right\}
\end{equation}
as it is straightforward to check that $\pi^*\alpha^\partial = \widetilde{\alpha}_{ADM}$. 

Introducing the new variables ${\boldsymbol{\gamma}}^{ab}\equiv\tgam^{ab}$, $\boldsymbol{\Pi}_{ab}=\frac{\sqrt{\tgam}}{2}\left(\tJ{ab} - \tgam_{ab}\tgam^{ij}\tJ{ij}\right)$ together with $\boldsymbol{\varphi}_{n}=-2\tgd{nn}$, $\boldsymbol{\varphi}_{a}=2\tgam_{ab}\tgd{bn}$ and $\boldsymbol{\xi}^\rho=\txi{\rho}$, we can write the symplectic boundary form as:
\begin{equation}
\omega^\partial=\epsilon\int\limits_{\partial M} \delta \boldsymbol{\gamma}^{ab}\delta\boldsymbol{\Pi}_{ab} + \delta\boldsymbol{\xi}^\rho\delta\boldsymbol{\varphi}_{\rho}
\end{equation}
and recover expression \eqref{projADM} and \eqref{ADMBoundaction} for the projection and the boundary action in the Darboux coordinates.

We would like to compute now the cohomological boundary vector field. First of all we must extract the analogous bulk vector field, encoding the equations of motion and the symmetries of the system, using the fundamental formula:
\begin{equation}\label{QADM}
\iota_Q\Omega_{BV} = \delta S + \pi_M^*\alpha^\partial
\end{equation}
A shortcut to do this in the ADM formalism, instead of computing cumbersome integrations by parts, consists in considering the classical Einstein-Hilbert action, whose classical vacuum equations of motion are given by 
$$\sqrt{\gamma}\left(R_{\mu\nu} - \left(\frac{1}{2}R-\Lambda\right)g_{\mu\nu}\right) \equiv G_{\mu\nu}=0$$
and to express them using the ADM  decomposition. This is done projecting the above equation on the new field direction, with the help of the Gauss-Codazzi equations and the Ricci equations.

Doing so, one obtains the projection of the relevant Euler-Lagrange terms in the ADM formalism, namely, forgetting about the BV extension for a moment:

\begin{align}\label{constraint1}
 G_\eta\coloneqq \pard{S_{ADM}}{\eta}&=\epsilon\sqrt{\gamma}\left( \epsilon\left(R^\partial - 2\Lambda\right) + K^2 - K_{ab}K^{ab}\right)\\\label{constraint2}
 G_{\beta_a}\coloneqq \pard{S_{ADM}}{\beta_b}&=2\epsilon\gamma^{ba}\left[\partial_c(\sqrt{\gamma} \gamma^{cd}K_{da})+ \frac12\partial_a\gamma^{cd}K_{cd}- \sqrt{\gamma}\partial_aK \right]\\
 G_{\gamma_{ab}}\coloneqq \pard{S_{ADM}}{\gamma_{ab}}&=\epsilon\sqrt{\gamma}\left(\partial_n K_{ab} - \beta^k\partial_kK_{ab} - 2K_{k(a}\partial_{b)}(g^{kc}\beta_c)\right)
\end{align}
Notice that the formula for $G_{\beta_a}$ is only apparently different from the usual \emph{momentum constraint} that can be found in the literature (see e.g. \cite{dWi}): 
$$\mathcal{H}_c\coloneqq \sqrt{\gamma}\gamma^{ba}\left(\gamma^{cd}\nabla^\partial_cK_{da} - \nabla^\partial_a K\right) $$
as  it can be seen by manipulating the covariant derivatives.

Adding the BV part we have that the derivatives of the action with respect to the new fields read:
\begin{align*}
\left(\pard{S_{BV}}{\eta}\right)&=-2\epsilon\eta\left( \partial_\rho\xi^\rho g^\dag{}^{nn} + \xi^\rho\partial_\rho g^\dag{}^{nn} - 2\partial_\rho\xi^ng^\dag{}^{n\rho}\right)\\
\left(\pard{S_{BV}}{\beta_a}\right)&=2\epsilon \left( \partial_\rho\xi^\rho g^\dag{}^{an} + \xi^\rho\partial_\rho g^\dag{}^{an} - \partial_\rho\xi^ag^\dag{}^{n\rho} - \partial_\rho\xi^n g^\dag{}^{a\rho}\right) + 2\epsilon\beta^a\left(\partial_\rho\xi^\rho g^\dag{}^{nn} + \xi^\rho\partial_\rho g^\dag{}^{nn} - 2\partial_\rho\xi^ng^\dag{}^{n\rho}\right)\\
\left(\pard{S_{BV}}{\gamma_{ab}}\right)&=\epsilon \left( \partial_\rho\xi^\rho g^\dag{}^{ab} + \xi^\rho\partial_\rho g^\dag{}^{ab} - 2\partial_\rho\xi^{(a}g^\dag{}^{b)\rho} \right)  - 2\epsilon\beta^a\beta^b\left(\partial_\rho\xi^\rho g^\dag{}^{nn} + \xi^\rho\partial_\rho g^\dag{}^{nn} - 2\partial_\rho\xi^ng^\dag{}^{n\rho}\right) \\ 
\left(\pard{S_{BV}}{g^\dag{}^{ab}}\right) &= \epsilon\left(\xi^\rho \partial_\rho \gamma_{ab} + 2\partial_{(a}\xi^n\beta_{b)} + 2\partial_{(a}\xi^c\gamma_{b)c}\right)\\
\left(\pard{S_{BV}}{g^\dag{}^{na}}\right) &=\epsilon\left( \xi^\rho\partial_\rho\beta_a + \partial_n\xi^n\beta_a +\partial_n\xi^b\gamma_{ab} + \partial_a\xi^n(-\eta^2 + \beta_c\beta^c) + \partial_a\xi^b\beta_b\right)\\
\left(\pard{S_{BV}}{g^\dag{}^{nn}}\right) &=\epsilon\left(\xi^\rho\partial_\rho(-\eta^2 + \beta_c\beta^c) + 2\partial_n\xi^n(-\eta^2 +\beta_c\beta^c) + 2\partial_n\xi^a \beta_a\right)
\end{align*}
together with
\begin{align*}
\left(\pard{S_{BV}}{\xi^n}\right) &=\epsilon\left(\partial_n(-\eta^2 +\beta_c\beta^c)g^\dag{}^{nn} + 2\partial_a(-\eta^2 + \beta_c\beta^c)g^\dag{}^{na} + 2\partial_{(a}\beta_{b)}g^\dag{}^{ab}\right) \\
&+\epsilon\left( 2(-\eta^2 + \beta_c\beta^c)\partial_ng^\dag{}^{nn} + 2\beta_a\partial_ng^\dag{}^{na} + 2(-\eta^2+\beta_c\beta^c)\partial_ag^\dag{}^{na}\right)\\
&+\epsilon\left(2\beta_{(a}\partial_{b)}g^\dag{}^{ab} - J_{ab}g^\dag{}^{ab}\right) + \xi^\rho\partial_\rho\chi_n + \partial_\rho\xi^\rho\chi_n + \partial_n\xi^\rho\chi_\rho\\
\left(\pard{S_{BV}}{\xi^a}\right) &= 2\epsilon\left(\partial_n\beta_ag^\dag{}^{nn} +  J_{ab}g^\dag{}^{nb} + \partial_b \beta_ag^\dag{}^{nb} +\partial_{(b}\gamma_{c)a}g^\dag{}^{bc} \right) \\ 
&+2\epsilon\left(\beta_a\partial_n g^\dag{}^{nn}\beta_a\partial_bg^\dag{}^{nb}+\gamma_{ab}\partial_ng^\dag{}^{nb} + \gamma_{a(b}\partial_{c)}g^\dag{}^{bc} - \frac12\partial_a\gamma_{cd}g^\dag{}^{cd}\right) \\
&- \epsilon \partial_a\left(-\eta^2 +\beta_c\beta^c\right)g^\dag{}^{nn} - 2\epsilon\partial_a\beta_ccg^\dag{}^{cn}+\partial_\rho\xi^\rho\chi_a + \xi^\rho\partial_\rho\chi_a + \partial_a\xi^\rho\chi_\rho \\
\left(\pard{S_{BV}}{\chi_\mu}\right) &=\xi^\rho\partial_\rho\xi^\mu
\end{align*}

Now we would like to use these derivatives to write down the components of the bulk vector field $Q$, by  imposing \eqref{QADM}. We are using the antifields coordinates $g^\dag{}^{\mu\nu}$, conjugate to $g_{\mu\nu}$, and therefore we have to expand and reorder the terms as follows:
\begin{align*}
(Q_{g^\dag})^{nn} =& -\frac12\eta^{-1}\epsilon \left(\pard{S}{\eta}\right) = -\frac12\eta^{-1}\epsilon G_\eta +\left( \partial_\rho\xi^\rho g^\dag{}^{nn} + \xi^\rho\partial_\rho g^\dag{}^{nn} - 2\partial_\rho\xi^ng^\dag{}^{n\rho}\right)\\
(Q_{g^\dag})^{na}=& \frac\epsilon2 \left(\pard{S}{\beta_a}\right) -  \beta^a (Q_{g^\dag})^{nn} = \epsilon G_{\beta_a}+\frac\epsilon2 \eta^{-1}\beta^aG_\eta + \left( \partial_c\xi^c g^\dag{}^{an} + \xi^\rho\partial_\rho g^\dag{}^{an} - \partial_\rho\xi^ag^\dag{}^{n\rho} - \partial_c\xi^n g^\dag{}^{ac}\right) \\ 
(Q_{g^\dag})^{ab}=&\epsilon\left(\pard{S}{\gamma_{ab}}\right) + \beta^a\beta^a(Q_{g^\dag})^{nn} = \epsilon G_{\gamma_{ab}} -\frac\epsilon2\eta^{-1}\beta^a\beta^bG_\eta + \left( \partial_\rho\xi^\rho g^\dag{}^{ab} + \xi^\rho\partial_\rho g^\dag{}^{ab} - 2\partial_\rho\xi^{(a}g^\dag{}^{b)\rho} \right)
\end{align*}
together with
\begin{align*}
(Q_\gamma)_{ab} =& \epsilon \left(\pard{S_{BV}}{g^\dag{}^{ab}}\right) =\left(\xi^\rho \partial_\rho \gamma_{ab} + 2\partial_{(a}\xi^n\beta_{b)} + 2\partial_{(a}\xi^c\gamma_{b)c}\right)\\
(Q_\beta)_{a} =& \epsilon \left(\pard{S_{BV}}{g^\dag{}^{an}}\right) =\left( \xi^\rho\partial_\rho\beta_a + \partial_n\xi^n\beta_a +\partial_n\xi^b\gamma_{ab} + \partial_a\xi^n(-\eta^2 + \beta_c\beta^c) + \partial_a\xi^b\beta_b\right)\\
(Q_\eta) =& - \frac{\epsilon}{2}\eta^{-1}  \left(\pard{S_{BV}}{g^\dag{}^{nn}}\right)  + \epsilon\eta^{-1}\beta^a\left(\pard{S_{BV}}{g^\dag{}^{na}}\right) -\frac{\epsilon}{2}\eta^{-1}\beta^a\beta^b \left(\pard{S_{BV}}{g^\dag{}^{ab}}\right)= \left( \xi^\rho\partial_\rho\eta + \partial_n\xi^n \eta - \eta\beta^a\partial_a\xi^n\right)\\
\end{align*}
and, with $\rho=1,2,3,n$:
\begin{equation*}
(Q_\xi)^\rho = \left(\pard{S_{BV}}{\chi_\rho}\right);\ \ \ 
(Q_\chi)_\rho = \left(\pard{S_{BV}}{\xi^{\rho}}\right) 
\end{equation*}

The bulk $Q$ vector field is extended to the normal jets when projected to the pre-boundary vector field $\widetilde{Q}$:
\[
(\widetilde{Q}J)_{\mu\nu}=(\partial_n(Qg_{\mu\nu}))\big|_{\partial M}=\left(\partial_n\xi^\rho\partial_\rho g_{\mu\nu} + \xi^\rho\partial_\rho \partial_ng_{\mu\nu} + 2\partial_{(\mu}\partial_n\xi^\rho g_{\nu)\rho} + 2\partial_{(\mu}\xi^\rho \partial_ng_{\nu)\rho}\right)\big|_{\partial M}
\]
of which we will only need 
\[
(\widetilde{Q}J)_{ab}=\epsilon\left(\partial_n\xi^\rho\partial_\rho \gamma_{ab} + \xi^\rho\partial_\rho J_{ab} + 2\partial_{(a}\partial_n\xi^\rho g_{b)\rho} + 2\partial_{(a}\xi^c J_{b)c} + 2\partial_{(a}\xi^n \partial_n\beta_{b)}\right)
\]
so that the full pre-boundary vector field reads:
\begin{align*}
\widetilde{Q}=&(\widetilde{Q}_\eta)\pard{}{\eta} + (\widetilde{Q}_\beta)_a\pard{}{\beta_a} + (\widetilde{Q}_\gamma)_{ab}\pard{}{\gamma_{ab}}+(\widetilde{Q}_{g^\dag})^{ab}\pard{}{g^\dag{}^{ab}} +2(\widetilde{Q}_{g^\dag})^{na}\pard{}{g^\dag{}^{na}} \\
+&(\widetilde{Q}_{g^\dag})^{nn}\pard{}{g^\dag{}^{nn}} + (\widetilde{Q}_\xi)^n\pard{}{\xi^n} + (\widetilde{Q}_\xi)^a\pard{}{\xi^a} + (\widetilde{Q}_\chi)_\mu\pard{}{\chi_\mu} + (\widetilde{Q}_J)_{lm}\pard{}{J_{lm}}
\end{align*}

Instead of computing the explicit projection of $\widetilde{Q}$ directly, we consider the following simplifying technique. We produce a degree one function via \cite{Royt}:
\[\widetilde{S}=\iota_{\widetilde{Q}}\iota_{\widetilde{E}}\widetilde{\omega}\]
where $\widetilde{E}$ is the Euler vector field on the space of pre-boundary fields, i.e.:
\[\widetilde{E}=\int\limits_{\partial{M}}\xi^\rho\pard{}{\xi^\rho} - g^\dag{}^{\mu\nu}\pard{}{g^\dag{}^{\mu\nu}} - 2\chi_\rho\pard{}{\chi_\rho}\]
Then the true boundary action $S^\partial$ is such that $\widetilde{S}=\pi^*S^\partial$ for degree reasons and the surjectivity of the surjection $\pi_M$, which factors through $\widetilde{\mathcal{F}}_{ADM}$. Moreover $Q^\partial$ is its Hamiltonian vector field. The boundary action is then found to be:
\begin{align}\notag
S^\partial=&\int\limits_{\partial M}\Bigg\{\sqrt{\tgam}\left(\frac\epsilon4\left(\widetilde{J}^{ab}\tJ{ab} - \tJ{}^2\right) + R^\partial -\Lambda \right) - 2\epsilon\partial_a\left(\txi{a}\tgd{nn}\right) - 2\epsilon\tgd{na}\partial_a\txi{n}\Bigg\}\txi{n}\\
+&\int\limits_{\partial M}\Bigg\{\sqrt{\tgam}\partial_a\tJ{} - \partial_c\left(\sqrt{\tgam}\tgam^{cd}\tJ{da}\right) - \frac{\sqrt{\tgam}}{2}(\partial_a\tgam^{cd})\tJ{cd} +2\epsilon\partial_c\left(\txi{c}\tgd{nb}\tgam_{ba}\right)\Bigg\}\txi{a}
\end{align}
where by $\tJ{}$ we denote the trace $\tgam^{ab}\tJ{ab}$.

Again using the definition of the Darboux coordinates, we can easily gather that the boundary action in that local chart will yield the expression \eqref{ADMBoundaction}, whereas the explicit components of $Q^\partial$ can be found using the equation $\iota_{Q^\partial}\omega^\partial = \delta S^\partial$.
\end{proof}

This result is a clean first step in the direction of  BV-BFV quantisation of General Relativity as proposed by CMR in \cite{CMR3}. It states the compatibility of bulk and boundary structures, in relation with the symmetries. Notice that the BV-BFV axioms \ref{BV-BFV} need not be satisfied by a generic gauge theory and the statement is therefore nontrivial. Arguable as it might be to consider gauge theories with this property to be somehow \emph{better quantisable}, it provides nevertheless a clear mean of distinction between different variational problems describing the same equations of motion (see \cite{CS2} for a comparison with the Palatini-Holst formulation of GR).

The machinery is able to handle a more complex and sophisticated set of data, than the usual procedures of canonical analysis. When inducing (or not) a theory on the boundary it encodes a number of characteristic features, packing up relevant data in a  very efficient way. As we will see in the following section, the piece of data that carries all the relevant information on the boundary is, not surprisingly, the boundary action.

Finally, recall that in the $1+1$ dimensional case it is known that the Einstein equations are trivial, and the symmetry distribution has to be amended to take conformal transformations into account. The critical dimension $d=1$ is however marked out by the equations for the kernel of the pre-boundary $2$-form $\widetilde{\omega}$, both in the classical and the BV-extended case (cf. Theorem \ref{Theo:ADM1} and Proposition \ref{Prop:ClAdm}), confirming that the strategy has to be altered to analyse this specific example.


\subsection{Constraint algebra and boundary gauge symmetry}
As we already announced, from the boundary action \eqref{ADMBoundaction} it is possible to read the constraint structure of canonical gravity. As a matter of fact, the degree zero (ghost number, $gh$) part of the derivatives $\pard{S^\partial}{\xi^\mu}$ reads
\begin{align}
\pard{S^\partial}{\xi^n}\Bigg|_{gh=0}=&\frac{\epsilon}{\sqrt{\boldsymbol{\gamma}}}\left(\boldsymbol{\Pi}^{ab}\boldsymbol{\Pi}_{ab} - \frac{1}{d-1}\boldsymbol{\Pi}^2\right) + \sqrt{\boldsymbol{\gamma}}\left(R^\partial -2\Lambda\right)\equiv \mathcal{H}\\
\pard{S^\partial}{\xi^a}\Bigg|_{gh=0}=&- \partial_c\left(\boldsymbol{\gamma}^{cd}\boldsymbol{\Pi}_{da}\right) -(\partial_a\boldsymbol{\gamma}^{cd})\boldsymbol{\Pi}_{cd}\equiv\mathcal{H}_a
\end{align}
which are the symplectic-reduced versions of the standard constraints \eqref{constraint1} and \eqref{constraint2}.

On the other hand, the residual gauge symmetries can be found by computing the relative components of the boundary cohomological vector field $Q^\partial$, using the fact that $\iota_{Q^\partial}\omega^\partial=\delta S^\partial$:
\begin{align*}
(Q^\partial)_{\boldsymbol{\xi}^n}=&\left( \boldsymbol{\xi}^c\partial_c\boldsymbol{\xi}^n\right)\pard{}{\boldsymbol{\xi}^n}\\
(Q^\partial)_{\boldsymbol{\xi}^{a}}=&\left(\boldsymbol{\xi}^n\boldsymbol{\gamma}^{ab}\partial_b\boldsymbol{\xi}^n + \boldsymbol{\xi}^c\partial_c\boldsymbol{\xi}^{a}\right)\pard{}{\boldsymbol{\xi}^{a}}\\
(Q^\partial)_{\boldsymbol{\gamma}_{ab}}=&\left(\boldsymbol{\xi}^n\frac{2}{\sqrt{\boldsymbol{\gamma}}}(\boldsymbol{\Pi}_{ab} - \frac{\boldsymbol{\gamma}_{ab}}{d-1}\boldsymbol{\Pi}) +\boldsymbol{\xi}^c\partial_c\boldsymbol{\gamma}_{ab}+2\partial_{(a}\boldsymbol{\xi}^c\boldsymbol{\gamma}_{b)c}\right)\pard{}{\boldsymbol{\gamma}_{ab}}
\end{align*}

It is interesting to notice that the symmetries above are a corrected version of the usual gauge symmetry for a $d$-dimensional metric on the boundary under the action of boundary diffeomorphisms $\boldsymbol{\xi}^\partial\in T[1]\partial M$. In fact they can be compactly rewritten as

\begin{align}
(Q^\partial)_{\boldsymbol{\gamma}}=&\boldsymbol{\xi}^n\frac{2}{\sqrt{\boldsymbol{\gamma}}}(\boldsymbol{\Pi} - \frac{\boldsymbol{\gamma}}{d-1}\mathrm{Tr}\boldsymbol{\Pi}) + L_{\boldsymbol{\xi}^\partial}\boldsymbol{\gamma}\\
(Q^\partial)_{\boldsymbol{\xi}^\partial}=&\boldsymbol{\xi}^n \boldsymbol{\gamma}^{-1}\nabla\boldsymbol{\xi}^n + \frac12[\boldsymbol{\xi}^\partial,\boldsymbol{\xi}^\partial]\\
(Q^\partial)_{\boldsymbol{\xi}^n}=&L_{\boldsymbol{\xi}^\partial}\xi^n
\end{align}

This means that they do not manifestly show a Lie algebra behaviour and the \emph{structure functions} depend on $\boldsymbol{\gamma}^{-1}$. Yet the boundary BFV action \eqref{ADMBoundaction} is at most linear in the antighosts $\boldsymbol\varphi$. This is in agreement with the observations in \cite{BlohWein}. The BFV formalism provides for a cohomological resolution of symmetry-invariant coisotropic submanifolds \cite{Schaetz09,Schaetz10,BFV1,Stash,stash97}, and in this case of the constraint submanifold of canonical gravity, modulo residual gauge symmetry. 

The (cohomological) description of the the canonical, constrained phase space for General Relativity is then obtained from a simple variational problem in the bulk. This encompasses a number of classical results in the field while clarifying related issues at the same time. Moreover, we stress that on top of obtaining the expected BFV resolution of the canonical structure on the boundary, we are able to establish a connection with the boundary data through the explicit projection $\pi$, and the fundamental equation $\iota_Q\Omega=\delta S_{ADM}^{BV}\alpha^\partial$. This is the starting point for the BV-BFV programme to quantisation of gauge theories on manifolds with boundary.


\begin{thebibliography}{99}

\bibitem{CMR1} A.S. Cattaneo, P. Mn\"ev, N. Reshetikhin, {\it Classical BV theories on manifolds with boundary}, Comm. Math. Phys. 332 (2): 535-603 (2014).


\bibitem{BV81} I. A. Batalin and G. A. Vilkovisky. {\it Gauge algebra and quantization}, Physics Letters B 102.1 (1981): 27-31.

\bibitem{BFV1} I. A. Batalin, E. S. Fradkin, \emph{A generalized canonical formalism and quantization of reducible gauge theories}, Phys. Lett. B 122 2, 157-164 (1983).
\bibitem{BFV2} I. A. Batalin, G. A. Vilkovisky, \emph{Relativistic S-matrix of dynamical systems with boson and fermion costraints}, Phys. Lett. B 69 3, 309-312 (1977).

\bibitem{BRST} C. Becchi, A. Rouet and R. Stora, Phys. Lett. B52 (1974) 344.\\
C. Becchi, A. Rouet and R. Stora, Commun. Math. Phys. 42 (1975) 127.\\
C. Becchi, A. Rouet and R. Stora, Ann. Phys. 98, 2 (1976).\\
I. V. Tyutin, Lebedev Physics Institute preprint 39 (1975), arXiv:0812.0580.


\bibitem{stash97} J. Stasheff, {\it Homological reduction of constrained Poisson algebras}, J. Diff. Geom. {\bf 45}, 221-240 (1997).

\bibitem{CMR2} A.S. Cattaneo, P. Mn\"ev, N. Reshetikhin, {\it Semiclassical quantization of Lagrangian field theories}, Mathematical Aspects of Quantum Field Theories, in Mathematical Physics Studies 2015, pp 275-324.
\bibitem{CMR3} A.S. Cattaneo, P. Mn\"ev, N. Reshetikhin, {\it Perturbative quantum gauge theories on manifolds with boundary}, arXiv:1507.01221.

\bibitem{Ati} M. Atiyah, \emph{Topological quantum field theories}, Inst. Hautes Etudes Sci. Publ. Math. {\bf 68}. 175-186 (1988).
\bibitem{Seg} G. Segal, \emph{The definition of conformal field theory}, in: Differential geometrical methods in theoretical physics, Springer Netherlands, 165-171 (1988).





\bibitem{Dirac} P.A.M. Dirac, {\it Generalized Hamiltonian dynamics}, Canad. J. Math. {\bf 2}, 129-148 (1950).


\bibitem{dWi} B.S. DeWitt, {\it Quantum Theory of Gravity. I. The Canonical Theory}, Phys. Rev. {\bf 160}, 1113, (1967).



\bibitem{ADM} R. Arnowitt, S. Deser,  C. Misner, {\it Dynamical Structure and Definition of Energy in General Relativity}. Physical Review {\bf 116} (5), 1322-1330 (1959).
\bibitem{BlohWein} C. Blohmann, M.C. Barbosa Fernandes and A. Weinstein {\it Groupoid symmetry and constraints in general relativity}, Commun. Contemp. Math. {\bf 15}, 1250061 (2013).



\bibitem{CS2} A.S. Cattaneo and M. Schiavina, {\it BV-BFV approach to General Relativity. Part 2: Palatini-Holst action}, work in progress.

\bibitem{CS3} A.S. Cattaneo and M. Schiavina. {\it On time}, work in progress.


\bibitem{PavelTh} P. Mnev, {\it Discrete BF theory}, arXiv:0809.1160 (2008).

\bibitem{FK} G. Felder and D. Kazhdan. {\it The classical master equation}, in Contemporary Mathematics, Vol {\bf 610} (2014).


\bibitem{Stash} J.Stasheff, {\it Deformation Theory and the Batalin-Vilkovisky Master Equation}, Deformation theory and symplectic geometry, proceedings, Meeting, Ascona, Switzerland, June 16-22, 1996, arXiv:q-alg/9702012.




\bibitem{Royt} D. Roytenberg, {\it AKSZ-BV Formalism and Courant Algebroid-induced Topological Field Theories}, Lett. Math. Phys. 79: 143-159 (2007).






\bibitem{Schaetz09} F. Schaetz, {\it BFV-complex and higher homotopy structures}, Comm. Math. Phys. {\bf 286} (2), 399-443 (2009).


\bibitem{Schaetz10} F. Schaetz, {\it Invariance of the BFV complex}, Pac. J. Math. {\bf 248} (2), (2010).




\end{thebibliography}
\end{document}